\theoremstyle{plain}
\newtheorem{theorem}{Theorem}
\newtheorem{lemma}{Lemma}
\newtheorem{corollary}{Corollary}
\theoremstyle{definition}
\newtheorem{definition}{Definition}
\DeclareMathOperator{\im}{im}
\newcommand{\B}{{\mathcal B}}
\newcommand{\C}{{\mathcal C}}
\newcommand{\F}{{\mathcal F}}
\newcommand{\K}{{\mathcal K}}
\newcommand{\R}{{\mathcal R}}
\renewcommand{\S}{{\mathcal S}}
\newcommand{\Sjjs}{{\mathcal S}_{j\to j^*}}
\newcommand{\Sjsjs}{{\mathcal S}_{j^*\to j^*}}
\newcommand{\Sjsj}{{\mathcal S}_{j^*\to j}}
\newcommand{\Sjo}{{\mathcal S}_{j\to 1}}
\newcommand{\Sto}{{\mathcal S}_{2\to 1}}
\newcommand{\Stho}{{\mathcal S}_{3\to 1}}
\newcommand{\WBjjs}{\mathcal{W}^{\B}_{j\to j^*}}
\newcommand{\WBjsj}{\mathcal{W}^{\B}_{j^*\to j}}
\newcommand{\bW}{\boldsymbol{W}}
\newcommand{\ba}{{\boldsymbol a}}
\newcommand{\bb}{{\boldsymbol b}}
\newcommand{\bu}{{\boldsymbol u}}
\newcommand{\bc}{{\boldsymbol c}}
\newcommand{\bcj}{{\boldsymbol c}_j}
\newcommand{\bcjs}{{\boldsymbol c}_{j^*}}
\newcommand{\bco}{{\boldsymbol c}_1}
\newcommand{\bcn}{{\boldsymbol c}_n}
\newcommand{\bw}{{\boldsymbol{w}}}
\newcommand{\bg}{{\boldsymbol{g}}}
\newcommand{\bgo}{{\boldsymbol{g}^{(1)}}}
\newcommand{\bgi}{{\boldsymbol{g}^{(i)}}}
\newcommand{\bgl}{{\boldsymbol{g}^{(\ell)}}}
\newcommand{\bhi}{{\boldsymbol{h}^{(i)}}}
\newcommand{\goj}{{\boldsymbol{g}^{(1)}_j}}
\newcommand{\gio}{{\boldsymbol{g}^{(i)}_1}}
\newcommand{\git}{{\boldsymbol{g}^{(i)}_2}}
\newcommand{\gith}{{\boldsymbol{g}^{(i)}_3}}
\newcommand{\gijs}{{\boldsymbol{g}^{(i)}_{j^*}}}
\newcommand{\gij}{{\boldsymbol{g}^{(i)}_j}}
\newcommand{\gin}{{\boldsymbol{g}^{(i)}_n}}
\newcommand{\glj}{{\boldsymbol{g}^{(\ell)}_j}}
\newcommand{\ff}{\mathbb{F}}
\newcommand{\ft}{\mathbb{F}_2}
\newcommand{\fq}{\mathbb{F}_q}
\newcommand{\ftl}{\mathbb{F}_{2^\ell}}
\newcommand{\fql}{\mathbb{F}_{q^{\ell}}}
\newcommand{\lra}{\leftrightarrow}
\newcommand{\Llra}{\Longleftrightarrow}
\newcommand{\al}{\alpha}
\newcommand{\bal}{\bm{\alpha}}
\newcommand{\bbe}{\bm{\beta}}
\newcommand{\bom}{\bm{\omega}}
\newcommand{\bkp}{\bm{\kappa}}
\newcommand{\bga}{\bm{\gamma}}
\newcommand{\bxi}{\bm{\xi}}
\newcommand{\wgb}{\bm{w}^{\bm{\gamma},\mathcal{B}}}
\newcommand{\wgtb}{\bm{w}^{\bm{\gamma}_t,\mathcal{B}}}
\newcommand{\wgib}{\bm{w}^{\bm{\gamma}_i,\mathcal{B}}}
\newcommand{\wxjgob}{\bm{w}^{\bxi^{j-1}\bm{\gamma}_1,\mathcal{B}}}
\newcommand{\wxjgtwb}{\bm{w}^{\bxi^{j-1}\bm{\gamma}_2,\mathcal{B}}}
\newcommand{\wxjgsb}{\bm{w}^{\bxi^{j-1}\bm{\gamma}_s,\mathcal{B}}}
\newcommand{\wxjgtb}{\bm{w}^{\bxi^{j-1}\bm{\gamma}_t,\mathcal{B}}}
\newcommand{\supp}{{\sf supp}}
\newcommand{\weight}{{\mathsf{wt}}}
\newcommand{\tr}{\mathsf{Tr}}
\newcommand{\spn}{\mathsf{span}_F}
\newcommand{\cL}{{\mathscr{L}}}
\newcommand{\cI}{{\mathscr{I}}}
\newcommand{\define}{\stackrel{\mbox{\tiny $\triangle$}}{=}}
\newcommand{\Cd}{\mathcal{C}^\perp}
\newcommand{\rsk}{\text{RS}(A,k)}
\newcommand{\rsnk}{\text{RS}(A,n-k)}
\begin{document}

\title{On the I/O Costs of Some Repair Schemes for Full-Length Reed-Solomon Codes
\thanks{
H. Dau is with the Department of Electrical and Computer System Engineering, Faculty of Engineering, Monash University, 14 Alliance Lane, Clayton, Victoria 3800, Australia. Email: hoang.dau@monash.edu.
I. Duursma is with the Departments of Mathematics, and also with the Coordinated Science Laboratory, University of Illinois at Urbana-Champaign, 1409 W. Green St, Urbana, IL 61801, USA. Email: duursma@illinois.edu.} 
}
\author{
Hoang Dau\\ Dept. ECSE, Monash University\\ Email: hoang.dau@monash.edu
\and Iwan Duursma\\ Dept. Mathematics, UIUC\\ Email: duursma@illinois.edu
\and Hien Chu\\ Dept. Mathematics, HCMUE\\Email: hienchu.1610@gmail.com
}
\date{}
\maketitle
\pagestyle{empty}

\begin{abstract}
\emph{Network transfer} and \emph{disk read} are the most time consuming operations in the repair process for node failures in erasure-code-based distributed storage systems. 
Recent developments on Reed-Solomon codes, the most widely used erasure codes in practical storage systems, have shown that efficient repair schemes specifically tailored to these codes can significantly reduce the \emph{network bandwidth} spent to recover single failures.
However, the \emph{I/O cost}, that is, the number of disk reads performed in these repair schemes remains largely unknown.
We take the first step to address this gap in the literature by investigating the I/O costs of some existing repair schemes for full-length Reed-Solomon codes. 
\end{abstract}

\section{Introduction}
\label{sec:intro}

Reed-Solomon (RS) codes~\cite{ReedSolomon1960}, although widely used as erasure codes to protect distributed storage systems (DSS) from frequent node failures, were believed to have very poor performance in repairing \emph{single} failures with respect to the \emph{repair bandwidth}. 
In the conventional/naive repair scheme for RS codes, the whole file has to be retrieved in order to repair just one lost data chunk. 
This drawback of RS codes led to the proposals of several other repair-efficient families of erasure codes such as regenerating codes~\cite{Dimakis_etal2007, Dimakis_etal2010, Dimakis_etal2010_survey} and locally repairable codes~\cite{OggierDatta2011,GopalanHuangSimitciYekhanin2012,PapailiopoulosDimakis2012}.

Despite the introduction of all of those new codes, RS codes remain to be the most popular codes in practice thanks to numerous inherent advantages, including optimal storage overhead, widest range of code parameters, and simple implementation. They are core components of major distributed storage systems such as Google's Colossus, Quantcast File System, Facebook's f4, Yahoo Object Store, Baidu's Atlas, Backblaze's Vaults, and Hadoop Distributed File System (see~\cite[Tab. I]{DauDuursmaKiahMilenkovic2016}).

In a recent line of research on repairing RS codes~\cite{Shanmugam2014, GuruswamiWootters2016, YeBarg_ISIT2016, DauMilenkovic2017, DuursmaDau2017, ChowdhuryVardy2017, TamoYeBarg2017, DauDuursmaKiahMilenkovic2016, DauDuursmaKiahMilenkovicTwoErasures2017, BartanWoottersMultipleFailures2017, YeBargMultipleErasures2017, LiWangJafarkhani-Allerton-2017}, it has been shown that with carefully crafted repair schemes, the repair bandwidth can be significantly reduced  for several families of RS codes. 
In this work, instead of focusing on the repair bandwidth, we investigate another important performance criterion for RS codes during the recovery process, that is, the (read) \emph{I/O cost} of the repair schemes\footnote{As reported in~\cite{MitraPantaRaBagchi2016}, network transfer and disk read constitute more than 98\% of the total reconstruction time in the Quantcast File System. It was also observed in another study~\cite{KhanBurnsPlankPierceHuang2012} that disk read always takes at least nine times longer than computation during repair or degraded read.}.
The I/O cost of a repair scheme is defined as the total amount of information that needs to be read from the disks located at the helper nodes during the repair of one failed node. 
The open question of how well RS codes perform when taken into account the I/O cost was originally raised by Guruswami and Wootters~\cite{GuruswamiWootters2016}. 

\begin{figure}[tb]
\centering
\includegraphics[scale=1]{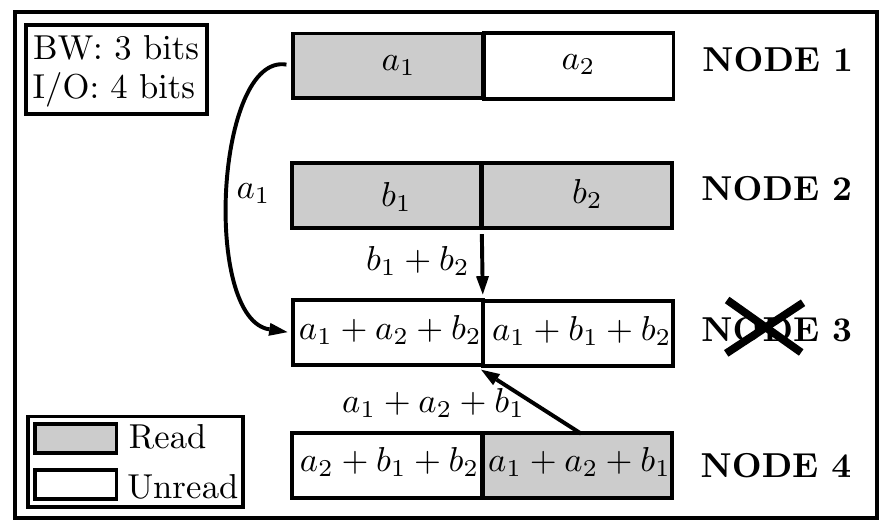}
\caption{Illustration of repair bandwidth and I/O cost during the repair process of one node failure in a DSS based on a $[4,2]_4$ Reed-Solomon code (see~\cite[Ex. 1]{DauDuursmaKiahMilenkovic2016} for its construction).
While the repair bandwidth is optimal (three bits), the I/O cost is four bits, which is as expensive as reading the whole file.
}
\label{fig:toy_example}
\end{figure}

To motivate the study of the I/O cost for RS codes, let us consider the toy example in Fig.~\ref{fig:toy_example}. 
The $4$-node storage system employs a $[4,2]$ RS codes over $\ff_4$ to store the file $(\ba,\bb) \in \ff_4^2$.
To reconstruct the two bits stored at Node~3 in a bandwidth-optimal way, the replacement node may contact three available nodes and downloads one bit of data from each. This results in a repair bandwidth of three bits, saving one bit compared to the conventional scheme, in which the replacement node contacts two nodes and downloads two bits from each.
However, the number of bits being read from the three nodes is four, which is the same as the file size. Thus, in terms of I/O cost, this bandwidth-efficient repair scheme is \emph{as expensive as the conventional repair scheme}.  
This observation raises an immediate question: when does this scenario happen?

\textbf{Our contribution.} We show that the bandwidth-optimal repair schemes proposed in~\cite{GuruswamiWootters2016, DauMilenkovic2017}, when applied to certain families of full-length RS codes, incur an I/O cost as high as that of the naive repair scheme (Section~\ref{sec:rotational}). 
We also prove that such a high I/O cost is a \emph{necessary} price to pay for the optimal bandwidth when the base field is $\ft$ and the code has two parities (Section~\ref{sec:converse}).

\section{Preliminaries}
\label{sec:pre}

Let $[n]$ denote the set $\{1,2,\ldots,n\}$. Let $F = \fq$ be the finite field of $q$ elements, for some prime power $q$. Let $E = \fql$ be an extension field of $F$, where $\ell \geq 1$, and let $E^* = E \setminus \{0\}$. 
We refer to the elements of $E$ as \emph{symbols} and the elements of $F$ as \emph{sub-symbols}. The field $E$ may also be viewed as a vector space of dimension $\ell$ over $F$, i.e. $E \cong F^\ell$, and hence each symbol in $E$ may be represented as a vector of length $\ell$ over $F$. 
We use $\spn(U)$ to denote the $F$-subspace of $E$ spanned by a set of elements $U$ of $E$. 
The (field) trace of any symbol $\bal \in E$ over $F$ is defined to be 
$\mathsf{Tr}_{E/F}(\bal) = \sum_{i = 0}^{\ell-1} \bal^{q^i}$. When clear from the context, we omit the subscript $E/F$. 
The support of a vector $\bu = (u_1,\ldots,u_\ell)$, denoted $\supp(\bu)$, is the set $\{j \colon u_j \neq 0\}$.
The (Hamming) weight of $\bu$, denoted $\weight(\bu)$, is $|\supp(\bu)|$.
The support of a set of vectors $U$ is $\supp(U) \define \cup_{\bu \in U} \supp(\bu)$.
A \emph{linear $[n,k]$ code} $\C$ over $E$ is an $E$-subspace of $E^n$ of dimension $k$. Each element of a code is referred to as a \emph{codeword}. 
The \emph{dual} $\Cd$ of a code $\C$ is the orthogonal complement of $\C$ in $E^n$ and has dimension $r = n - k$. 

\begin{definition} 
\label{def:RS}
Let $E[x]$ denote the ring of polynomials over $E$. A Reed-Solomon code $\rsk \subseteq E^n$ of dimension $k$ over a finite
field $E$ with evaluation points $A=\{\bal_j\}_{j=1}^n \subseteq E$
is defined as \vspace{-3pt} 
\[
\rsk = \Big\{\big(f(\bal_1),\ldots,f(\bal_n)\big) \colon f \in E[x],\ \deg(f) < k \Big\}. 
\]
\end{definition} 
The Reed-Solomon code is \emph{full length} if $n = |E|$. It is well known that the dual of a full-length Reed-Solomon code $\rsk$ is another Reed-Solomon code $\rsnk$~(as a corollary of~\cite[Chp.~10, Thm.~4]{MW_S}). 

\textbf{Trace repair framework.}
First, note that each symbol in $E$ can be recovered from its $\ell$ independent traces. More precisely, given a basis $\{\bbe_i\}_{i=1}^\ell$ of $E$ over $F$, any $\bal \in E$ can be uniquely determined given the values of $\tr(\bbe_i\,\bal)$ for $i\in [\ell]$, i.e. $\bal = \sum_{i=1}^\ell\tr(\bbe_i \bal)\bbe^\ast_i$, where $\{\bbe^\ast_i\}_{i=1}^\ell$ is the dual (trace-orthogonal) basis of $\{\bbe_i\}_{i=1}^\ell$ (see, e.g.~\cite[Ch.~2, Def.~2.30]{LidlNiederreiter1986}).

Let $\C$ be an $[n,k]$ linear code over $E$ and $\Cd$ its dual. 
If $\bc = (\bco,\ldots,\bcn) \in \C$ and $\bg = (\bg_1,\ldots,\bg_n) \in \Cd$ then $\bc \cdot \bg = \sum_{j=1}^n \bcj\bg_j = 0$. Suppose $\bcjs$ is erased and needs to be recovered.
In the trace repair framework, choose a set of $\ell$ dual codewords $\bgo,\ldots,\bgl$ such that $\dim_F\big(\{\gijs\}_{i=1}^\ell\big) = \ell$. Since the trace is a linear map, we obtain the following $\ell$ equations \vspace{-5pt} 
\begin{equation}
\label{eq:repair_equations} 
\tr\big(\gijs \bcjs\big) = -\sum_{j \neq j^*} \tr\big(\gij \bcj\big),\quad i \in [\ell]. \vspace{-5pt}
\end{equation} 
In order to recover $\bcjs$, one needs to retrieve sufficient information from $\{\bcj\}_{j \neq j^*}$ to compute the right-hand sides of \eqref{eq:repair_equations}. 
We define, for every $j \in [n]$, \vspace{-5pt}
\begin{equation}
\label{eq:S}
\Sjjs \define \spn\bigg(\left\{\goj,\ldots,\glj\right\}\bigg) \vspace{-5pt}
\end{equation}
and refer to $\Sjjs$ as a \emph{column-space} of the repair scheme when $j \neq j^*$. 
Then for each $j \neq j^*$, in order to determine $\tr(\gij \bcj)$ for all $i \in [\ell]$, 
it suffices to retrieve $\dim_F(\Sjjs)$ sub-symbols (in $F$) only. 
Indeed, suppose $\{\bg^{i_t}_j\}_{t=1}^s$ is an $F$-basis of $\Sjjs$, then by 
retrieving just $s$ traces $\tr(\bg^{i_1}_j \bcj),\ldots,\tr(\bg^{i_s}_j \bcj)$ of $\bcj$, 
all other traces $\tr(\gij \bcj)$ can be computed as $F$-linear combinations of those $s$ traces without any knowledge of $\bc$. Finally, since $\{\gijs\}_{i=1}^\ell$ is $F$-linearly independent, $\bcjs$ can be recovered from its $\ell$ corresponding traces on the left-hand side of \eqref{eq:repair_equations}. 
We refer to such a scheme as a \emph{repair scheme based on} $\{\bgi\}_{i=1}^\ell$.
It was known that this type of repair schemes includes every possible linear repair scheme for RS codes~\cite{GuruswamiWootters2016}.  

\begin{lemma}[Guruswami-Wootters~\cite{GuruswamiWootters2016}] 
\label{lem:GW}
Suppose $E = \fql$, $F = \fq$, $\C$ is an $[n,k]$ linear code over $E$ and $\Cd$ is its dual. 
The repair scheme for $\bcjs$ based on $\ell$ dual codewords $\bgo,\ldots,\bgl$, where $\dim_F\big(\{\gijs\}_{i=1}^\ell\big) = \ell$, incurs a repair bandwidth of $\sum_{j \neq j^*} \dim_F(\Sjjs)$ sub-symbols in $F$, where $\Sjjs$ is defined as in \eqref{eq:S}.  
\end{lemma}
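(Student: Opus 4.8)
The plan is to stay inside the trace repair framework established above and simply verify two things about the described scheme: that it downloads exactly $\sum_{j\neq j^*}\dim_F(\Sjjs)$ sub-symbols, and that it correctly reconstructs $\bcjs$. First I would isolate the one linearity fact that makes everything work: for any fixed $\bcj\in E$, the map $\gamma\mapsto\tr(\gamma\,\bcj)$ from $E$ to $F$ is $F$-linear. Hence if $\{\bg^{i_t}_j\}_{t=1}^s$ is an $F$-basis of the column-space $\Sjjs$, so that $s=\dim_F(\Sjjs)$, and we expand $\gij=\sum_{t=1}^s \lambda^{(i)}_{j,t}\,\bg^{i_t}_j$ with $\lambda^{(i)}_{j,t}\in F$, then $\tr(\gij\bcj)=\sum_{t=1}^s \lambda^{(i)}_{j,t}\,\tr(\bg^{i_t}_j\bcj)$. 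The crucial point to flag explicitly is that the coefficients $\lambda^{(i)}_{j,t}$ depend only on the public data $\bgo,\ldots,\bgl$ and the chosen bases, not on the stored codeword $\bc$, so the helper nodes need no knowledge of $\bc$.

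Next I would specify the protocol precisely: for each $j\neq j^*$, node $j$ transmits the $s$ sub-symbols $\tr(\bg^{i_1}_j\bcj),\ldots,\tr(\bg^{i_s}_j\bcj)$, i.e.\ $\dim_F(\Sjjs)$ elements of $F$; summing over $j\neq j^*$ yields the claimed download of $\sum_{j\neq j^*}\dim_F(\Sjjs)$ sub-symbols. For correctness, since $\bgi\in\Cd$ and $\bc\in\C$, orthogonality gives $\sum_{j=1}^n\gij\bcj=0$, hence $\gijs\bcjs=-\sum_{j\neq j^*}\gij\bcj$, and applying the trace yields the repair equations \eqref{eq:repair_equations}. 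From the received sub-symbols the replacement node computes every $\tr(\gij\bcj)$ by the linear-combination identity above, hence evaluates the right-hand side of \eqref{eq:repair_equations} for each $i\in[\ell]$, obtaining the $\ell$ values $\tr(\gijs\bcjs)$. Finally, because $\dim_F\big(\{\gijs\}_{i=1}^\ell\big)=\ell$, these elements form an $F$-basis of $E$; letting $\{\bbe^*_i\}_{i=1}^\ell$ be the dual (trace-orthogonal) basis, $\bcjs=\sum_{i=1}^\ell\tr(\gijs\bcjs)\,\bbe^*_i$, so $\bcjs$ is recovered.

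I do not expect a genuine obstacle here; the argument is bookkeeping built on the $F$-linearity of $\gamma\mapsto\tr(\gamma\bcj)$. The one place to be careful is justifying that $\dim_F(\Sjjs)$ sub-symbols from node $j$ really do suffice — that all $\ell$ traces $\tr(\gij\bcj)$, $i\in[\ell]$, lie in the $F$-span of the $s$ downloaded traces uniformly over every possible value of $\bcj$ — and, dually, that the left-hand sides $\tr(\gijs\bcjs)$, $i\in[\ell]$, carry enough information to pin down $\bcjs$; both reduce to the stated independence and basis facts. Note that this lemma only asserts the bandwidth \emph{of this particular scheme}; it makes no optimality claim, so no converse argument is needed.
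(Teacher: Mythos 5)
Your proof is correct and follows essentially the same line of reasoning that the paper sketches informally in the paragraph preceding Lemma~\ref{lem:GW}: use $F$-linearity of $\gamma\mapsto\tr(\gamma\bcj)$ to show that downloading an $F$-basis worth of traces from each $\Sjjs$ suffices, then recover $\bcjs$ from its $\ell$ independent traces via the dual basis. The paper cites this to Guruswami--Wootters and does not give a separate formal proof, but your write-up is a faithful formalization of its argument.
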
 

\textbf{I/O Cost of a Repair Scheme.}
Let $\B = \{\bbe_i\}_{i=1}^\ell$ be an $F$-basis of $E$. For each $\bal \in E$, we may write $\bal = \sum_j \al_j \bbe_j$, where $\al_j \in F$. The vector $(\al_1,\ldots,\al_\ell)\in F^\ell$ is the vector representation of $\bal$ with respect to the basis $\B$.
We often write $\bal = (\al_1,\ldots,\al_\ell)_{\B}$ or just $\bal = (\al_1,\ldots,\al_\ell)$ for brevity.
We first define the I/O cost of a function and then proceed to describe the I/O cost of a repair scheme. The underlying assumption is that each sub-symbol $\al_i$ of $\bal$ can be read from the storage disk separately without accessing other sub-symbols. 

\begin{definition}[I/O cost of functions] 
\label{def:I/O}
The (read) I/O cost of a function $f(\cdot)$ with respect to a basis $\B$ is the minimum number of sub-symbols of $\bal \in E$ needed to compute $f(\bal)$.
The I/O cost of a set of functions $\F$ is the minimum number of sub-symbols of $\bal$ needed for the computation of $\{f(\bal) \colon f \in \F\}$. 
\end{definition} 

\begin{lemma} 
\label{lem:I/O_linear}
The following statements hold. 
\begin{itemize}[leftmargin=0.6cm]
	\item[(a)] The I/O cost of a linear function $f_{\bw}(\bal) \define \bw \cdot \bal = \sum_j w_j\al_j$ with respect to a basis $\B$ is $\weight(\bw) = |\supp(\bw)|$, where $\bw = (w_1,\ldots,w_\ell) \in E$.
	\item[(b)] The I/O cost of a set of linear functions $\bw_1 \cdot \bal, \ldots, \bw_s \cdot \bal$ with respect to $\B$ is $|\cup_{j = 1}^{s} \supp(\bw_j)|$.
	\item[(c)] The I/O cost of the trace functional $\tr_{\bga}(\cdot)$, defined by $\tr_{\bga}(\bal) \define \tr(\bga\bal)$, with respect to $\B$ is $\weight\big( \wgb \big)$, where 
	\begin{equation} 
\label{eq:wbg}
\wgb \define \big(\tr(\bga\bbe_1),\ldots,\tr(\bga\bbe_\ell)\big). 
\end{equation} 
	\item[(d)] The I/O cost of the set of trace functionals $\{\tr_{\bga}(\cdot) \colon \bga \in \Gamma\}$ with respect to $\B$ is $|\cup_{\bga \in \Gamma} \supp(\wgb)|$. 
\end{itemize}
\end{lemma}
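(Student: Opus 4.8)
The plan is to prove part~(a) directly from Definition~\ref{def:I/O} and then obtain (b)--(d) as short consequences. For (a), one direction is immediate: since $f_{\bw}(\bal) = \sum_{j \in \supp(\bw)} w_j \al_j$ depends only on the coordinates indexed by $\supp(\bw)$, reading those $\weight(\bw)$ sub-symbols suffices to compute $f_{\bw}(\bal)$ for every $\bal \in E$. For the reverse inequality I would exploit that the I/O cost is a worst-case quantity over all inputs: suppose some strategy reads only the sub-symbols indexed by a set $S \subseteq [\ell]$ yet computes $f_{\bw}(\bal)$ correctly for all $\bal$; if there were an index $j_0 \in \supp(\bw) \setminus S$, then the two symbols $\bal = \bO$ and $\bal' = \bbe_{j_0}$ (whose coordinate vector with respect to $\B$ is the $j_0$-th standard unit vector) agree on every coordinate in $S$, while $f_{\bw}(\bO) = 0 \neq w_{j_0} = f_{\bw}(\bbe_{j_0})$, so the strategy cannot be correct on both inputs. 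Hence $S \supseteq \supp(\bw)$, and the I/O cost equals $|\supp(\bw)| = \weight(\bw)$.

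Part~(b) follows the same pattern: reading the coordinates in $\cup_{t=1}^{s}\supp(\bw_t)$ lets one evaluate every $\bw_t \cdot \bal$, while conversely any index $j_0$ in this union lies in $\supp(\bw_t)$ for some $t$, and the perturbation argument of part~(a) applied to that single $\bw_t$ forces $j_0$ to be read; thus the I/O cost of the set is exactly $|\cup_{t=1}^{s}\supp(\bw_t)|$. For part~(c), expanding $\bal = \sum_{j=1}^{\ell}\al_j\bbe_j$ with $\al_j \in F$ and using the $F$-linearity of the trace gives $\tr_{\bga}(\bal) = \sum_{j=1}^{\ell}\al_j\,\tr(\bga\bbe_j) = \wgb \cdot (\al_1,\ldots,\al_\ell)$, so $\tr_{\bga}$ is precisely the linear function $f_{\wgb}$ of the $\B$-coordinates; part~(a) with $\bw = \wgb$ then yields I/O cost $\weight(\wgb)$. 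Part~(d) is immediate: by this identification each $\tr_{\bga}$ equals $f_{\wgb}$, and part~(b) applied to the family $\{f_{\wgb} \colon \bga \in \Gamma\}$ gives I/O cost $|\cup_{\bga \in \Gamma}\supp(\wgb)|$.

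I expect the only genuinely non-routine point to be the converse direction in part~(a): one must take seriously that the I/O cost is the minimum read set that succeeds for \emph{all} inputs, and then exhibit two inputs that are indistinguishable to any strategy reading too few coordinates. Once this indistinguishability argument is in hand, the remaining parts are bookkeeping; in particular, (b) and (d) encode the slightly counterintuitive fact that evaluating a \emph{set} of (trace) functionals offers no savings over the union of the individual supports, because a single functional in the set that depends on a given sub-symbol already forces that sub-symbol to be read.
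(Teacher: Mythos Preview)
Your proposal is correct and follows essentially the same route as the paper: the paper simply asserts that (a) and (b) follow directly from Definition~\ref{def:I/O}, proves (c) via the identity $\tr_{\bga}(\bal) = \sum_i \tr(\bga\bbe_i)\al_i = \wgb \cdot \bal$, and derives (d) from (b) and (c). You have merely written out explicitly the indistinguishability argument for the lower bound in (a) that the paper leaves implicit.
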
 
\begin{proof} 
The statements (a) and (b) follow directly from Definition~\ref{def:I/O}. 
Statement (c) holds as $\tr_{\bga}(\bal) = \sum_i \tr(\bga\bbe_i)\al_i = \wgb \cdot \bal$. The last statement follows from (b) and (c).  
\end{proof} 

The following lemma is due to the linearity of trace. 

\begin{lemma} 
\label{lem:wgb_basis}
Let $\Gamma \define \spn\big(\{\bga_i\}_{i=1}^\ell\big)$, $\{\bga'_i\}_{i=1}^s$ be an $F$-basis of $\Gamma$, $\wgb$ defined as in \eqref{eq:wbg}. The following statements hold.
\begin{itemize}[leftmargin=0.6cm]
	\item[(a)] If $\bga\hspace{-0.1cm} =\hspace{-0.1cm} \sum_{i=1}^\ell a_i\bga_i$, for $a_i \in F$, then $\wgb\hspace{-0.1cm} =\hspace{-0.1cm} \sum_{i=1}^\ell a_i\wgib$, and therefore, $\supp(\wgb) \subseteq \cup_{i=1}^\ell \supp(\wgib)$. 
	\item[(b)] $\cup_{i = 1}^\ell \supp(\wgib)\hspace{-0.1cm} =\hspace{-0.1cm} \cup_{\bga \in \Gamma} \supp(\wgb)\hspace{-0.1cm} =\hspace{-0.1cm} \cup_{t = 1}^s \supp(\bw^{\bga'_i,\B}).$
\end{itemize}
\end{lemma}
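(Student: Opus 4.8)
The plan is to reduce everything to the $F$-linearity of the trace, so that part (a) carries essentially all the content and part (b) becomes routine set-theoretic bookkeeping on top of it.

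For (a), I would fix $\bga = \sum_{i=1}^\ell a_i\bga_i$ with $a_i \in F$ and compute, for each coordinate $j \in [\ell]$,
\[
\tr(\bga\bbe_j) \;=\; \tr\Big(\sum_{i=1}^\ell a_i\bga_i\bbe_j\Big) \;=\; \sum_{i=1}^\ell a_i\,\tr(\bga_i\bbe_j),
\]
where the second equality uses additivity of the trace together with $\tr(a\,\eta) = a\,\tr(\eta)$ for $a \in F$ and $\eta \in E$. Since the right-hand side is exactly the $j$-th coordinate of $\sum_{i=1}^\ell a_i\wgib$, this proves $\wgb = \sum_{i=1}^\ell a_i\wgib$ as vectors in $F^\ell$. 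The support inclusion is then immediate: if the $j$-th coordinate of $\wgb$ is nonzero, the terms $a_i(\wgib)_j$ cannot all vanish, so $(\wgib)_j \neq 0$ for at least one $i$, i.e. $j \in \supp(\wgib)$; hence $\supp(\wgb) \subseteq \cup_{i=1}^\ell\supp(\wgib)$.

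For (b), I would establish the two equalities by a short chain of inclusions. The inclusion $\cup_{i=1}^\ell\supp(\wgib) \subseteq \cup_{\bga\in\Gamma}\supp(\wgb)$ holds because each $\bga_i$ lies in $\Gamma$. Conversely, every $\bga \in \Gamma$ is an $F$-linear combination of the $\bga_i$, so part (a) gives $\supp(\wgb) \subseteq \cup_{i=1}^\ell\supp(\wgib)$, and taking the union over all $\bga \in \Gamma$ yields the reverse inclusion; hence the first two sets coincide. For the third set I would observe that the one-line computation used in (a) invokes only that the indexing family spans $\Gamma$, not that it is linearly independent, so the identical argument applied to the $F$-basis $\{\bga'_t\}_{t=1}^s$ of $\Gamma$ gives $\cup_{t=1}^s\supp(\bw^{\bga'_t,\B}) = \cup_{\bga\in\Gamma}\supp(\wgb)$, and all three sets are equal.

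I do not expect a genuine obstacle here; the only point deserving a word of care is that the spanning family $\{\bga_i\}_{i=1}^\ell$ need not be linearly independent (its span $\Gamma$ has dimension $s \le \ell$), but this is harmless, since every step uses only the spanning property. In particular no separate version of (a) for a size-$s$ family is needed to close the last equality in (b).
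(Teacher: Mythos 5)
Your proof is correct, and it is exactly the argument the paper intends: the paper offers no written proof for this lemma, only the one-line remark that it follows from the linearity of the trace, which is precisely the computation you carry out coordinatewise in (a) and then leverage through routine set-theoretic inclusions in (b). Your closing remark that the argument uses only the spanning property of $\{\bga_i\}_{i=1}^\ell$, not linear independence, is a worthwhile observation that the paper leaves implicit.
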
 

The I/O cost of the repair scheme based on a set of dual codewords $\{\bgi\}_{i=1}^\ell$ is the minimum number of sub-symbols of $\bcj$'s needed in the computation of the right-hand sides of \eqref{eq:repair_equations}. We provide the formal definition below. 

\begin{definition}[I/O cost of a repair scheme] 
The I/O cost of the repair scheme based on a set of dual codewords $\{\bgi\}_{i=1}^\ell$ with respect to a basis $\B$ is the sum of the I/O costs of the sets of trace functionals $\F_j = \big\{\tr_{\gij}(\cdot)\big\}_{i=1}^\ell$, $j \in [n] \setminus \{j^*\}$.  
\end{definition} 

\begin{lemma}
\label{lem:I/O} 
Suppose $\bcjs$ is lost and needs to be recovered. The I/O cost of the repair scheme based on $\ell$ dual codewords $\bgi = \big(\gio,\ldots,\gin\big)$, $i \in [\ell]$, with respect to a basis $\B$ is
\[
\sum_{j \neq j^*} \big|\supp\big(\WBjjs\big)\big|, 
\]
where $\wgb$ is defined as in \eqref{eq:wbg} and
\begin{equation}
\label{eq:WB}
\WBjjs \define \left\{\wgb \colon \bga \in \S_{j\to j^*}\right\}.
\end{equation}
\end{lemma}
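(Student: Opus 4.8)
The plan is to peel the statement apart definition by definition, so that it becomes an almost immediate consequence of Lemma~\ref{lem:I/O_linear}(d) and Lemma~\ref{lem:wgb_basis}(b). The key observation is that the I/O cost of a repair scheme was \emph{defined} to be additive over the helper nodes $j \neq j^*$, namely $\sum_{j \neq j^*}(\text{I/O cost of }\F_j)$ with $\F_j = \{\tr_{\gij}(\cdot)\}_{i=1}^\ell$ acting on the symbol $\bcj$ stored at node $j$. Hence it suffices to prove, for each fixed $j \neq j^*$, that the I/O cost of $\F_j$ equals $|\supp(\WBjjs)|$, and then sum.

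The steps, in order, are as follows. First, apply Lemma~\ref{lem:I/O_linear}(d) with $\Gamma$ taken to be the finite set $\{\gij\}_{i=1}^\ell$: this gives that the I/O cost of $\F_j$ is $\big|\bigcup_{i=1}^\ell \supp(\bw^{\gij,\B})\big|$, where $\bw^{\gij,\B}$ is the vector of traces defined in \eqref{eq:wbg}. Second, apply Lemma~\ref{lem:wgb_basis}(b) with the spanning family $\{\gij\}_{i=1}^\ell$; its $F$-span is precisely the column-space $\Sjjs$ by \eqref{eq:S}, so the lemma yields $\bigcup_{i=1}^\ell \supp(\bw^{\gij,\B}) = \bigcup_{\bga \in \Sjjs}\supp(\wgb)$. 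Third, unwind the definition \eqref{eq:WB} of $\WBjjs$ together with the definition of the support of a set of vectors to get $\bigcup_{\bga \in \Sjjs}\supp(\wgb) = \supp(\WBjjs)$. Chaining these three equalities shows the I/O cost of $\F_j$ is $|\supp(\WBjjs)|$, and summing over $j \neq j^*$ finishes the proof.

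Since the argument is a chain of substitutions, I do not expect a genuine obstacle; the two places that warrant a moment's care are conceptual rather than technical. The first is making explicit why the set of functionals associated with helper node $j$ is the \emph{entire} collection $\F_j$ of $\ell$ traces and not some cheaper encoding of its contribution to \eqref{eq:repair_equations}: this is exactly what the definition of the I/O cost of a repair scheme prescribes (the repair equations range over all $i \in [\ell]$), and it is what makes Lemma~\ref{lem:I/O_linear}(d) applicable verbatim. The second is checking that replacing the finite set $\{\gij\}_{i=1}^\ell$ by its $F$-span $\Sjjs$ does not enlarge the union of supports — but this is the content of Lemma~\ref{lem:wgb_basis}, whose part (a) shows $\supp(\wgb) \subseteq \bigcup_{i=1}^\ell \supp(\bw^{\gij,\B})$ for every $\bga$ in the span, so passing to the span contributes nothing new.
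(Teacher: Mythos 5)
Your proof is correct and follows essentially the same route as the paper's: invoke Lemma~\ref{lem:I/O_linear}(d) to express the per-node I/O cost as a union of supports over the finite generating set $\{\gij\}_{i=1}^\ell$, pass to the span $\Sjjs$ via Lemma~\ref{lem:wgb_basis}(b), identify the result with $|\supp(\WBjjs)|$ by definition \eqref{eq:WB}, and sum over $j \neq j^*$. The extra commentary you give (why all $\ell$ traces are counted, and why enlarging to the span does not change the union) is accurate and only elaborates on steps the paper takes for granted.
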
 
\begin{proof}
By Lemma~\ref{lem:I/O_linear}~(d), the I/O cost of the set of trace functionals $\F_j = \big\{\tr_{\gij}(\cdot)\big\}_{i=1}^\ell$ is $\big|\cup_{i=1}^\ell \supp\big(\bw^{\gij,\B}\big)\big|$ for each $j \neq j^*$. 
According to Lemma~\ref{lem:wgb_basis}~(b) and the definition of $\Sjjs$, this quantity is equal to $|\cup_{\bga \in \Sjjs} \supp(\wgb)|$, which is the same as $|\supp(\WBjjs)|$.
The lemma follows by summing up the I/O costs of the sets $\F_j$, $j \in [n] \setminus \{j^*\}$.  
\end{proof} 


\section{I/O Cost of Rotational Repair Schemes}
\label{sec:rotational}

\subsection{Fixed Basis}

Suppose $\bcjs$ is lost and needs to be recovered, and a basis $\B$ is fixed for all storage nodes, which is the usual situation in practice. 
The ultimate goal is to find Pareto solutions to the following \emph{multiobjective} optimization problem.\vspace{-4pt}
\begin{equation}
\label{eq:mop1}
\hspace{-0.05cm}\min_{\genfrac{}{}{0pt}{}{\bgo,\ldots,\bgl \in \Cd}{\dim_F(\Sjsjs) = \ell}}\hspace{-0.2cm} \bigg(\sum_{j \neq j^*}\hspace{-0.1cm} \dim_F(\Sjjs),\hspace{-0.05cm} \sum_{j \neq j^*}\hspace{-0.1cm}\big|\supp\big(\WBjjs\big)\big| \bigg),
 \vspace{-4pt}
\end{equation}
where $\Sjjs$ and $\WBjjs$ are defined in \eqref{eq:S} and \eqref{eq:WB}, respectively. 

Note that if we ignore the second objective function on the I/O cost, the problem reduces to the previously studied problem of minimizing the repair bandwidth. Ignoring the first objective function instead, it reduces to the one minimizing the I/O cost only. 
The problem \eqref{eq:mop1}, which seeks to minimize a multiobjective function consisting of bandwidth and I/O cost, appears to be challenging even for very particular sets of code parameters. Hence, we start with a simpler task: to study the I/O costs of those repair schemes that achieve optimal repair bandwidth for certain families of full-length RS codes. 

\begin{definition}[Rotational repair scheme] 
Let $n = |E| = q^\ell$. The repair scheme for $\bcjs$ based on a set of dual codewords $\{\bgi\}_{i=1}^\ell$ is called \emph{rotational} if there exists an $F$-subspace $\S$ of $E$ 
such that $\Sjjs = \rho_j\S$ for every $j \neq j^*$ and moreover, $\{\rho_j\}_{j \neq j^*} = E^*$.   
\end{definition} 

In other words, a repair scheme for $\bcjs$ is rotational if each of its column-space $\Sjjs$, $j \neq j^*$, is a translate of a common $F$-subspace $\S$ of $E$ with a different multiplier.
As a consequence, in a rotational repair scheme, every column-space has the same $F$-dimension. This common dimension is referred to as the \emph{column-dimension} of the rotational repair scheme.
Before presenting our main theorem on the I/O cost of a rotational repair scheme for full-length RS codes, a few auxiliary lemmas are needed. 

\begin{lemma} 
\label{lem:intersection}
Let $K \define \ker(\tr(\cdot)) = \{\bal \in E \colon \tr(\bal) = 0\}$.
If $\{\bga_t\}_{t=1}^s \subseteq E$ is $F$-linearly independent then\vspace{-5pt}
\[
\dim_F\big( \cap_{t=1}^s K / \bga_t \big) = \ell - s. \vspace{-3pt}
\]
\end{lemma}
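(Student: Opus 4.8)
The plan is to interpret the set $K/\bga_t = \{\beta/\bga_t : \beta \in K\} = \bga_t^{-1}K$ as the kernel of the $F$-linear functional $\bal \mapsto \tr(\bga_t \bal)$ on $E$, and then count the codimension of an intersection of $s$ such kernels. Concretely, define $\ph_t : E \to F$ by $\ph_t(\bal) = \tr(\bga_t\bal)$. Since $\tr$ is $F$-linear and nonzero, each $\ph_t$ is a nonzero element of the $F$-dual space $E^* = \mathrm{Hom}_F(E,F)$, which has dimension $\ell$. One checks directly that $\ker(\ph_t) = \{\bal : \tr(\bga_t\bal) = 0\} = \{\gamma/\bga_t : \tr(\gamma) = 0\} = K/\bga_t$, so the assertion becomes $\dim_F\big(\cap_{t=1}^s \ker(\ph_t)\big) = \ell - s$.

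The key step is that the functionals $\ph_1,\ldots,\ph_s$ are $F$-linearly independent in $E^*$; once that is established, the standard linear-algebra fact that the intersection of the kernels of $s$ independent functionals on an $\ell$-dimensional space has dimension $\ell - s$ finishes the proof. To prove independence, suppose $\sum_{t=1}^s a_t \ph_t = 0$ with $a_t \in F$. Then for all $\bal \in E$ we have $\tr\big((\sum_t a_t\bga_t)\bal\big) = 0$, i.e. $\sum_t a_t\bga_t$ lies in the radical of the trace bilinear form $(\bx,\by)\mapsto \tr(\bx\by)$. But this form is nondegenerate on $E$ (the trace map $E/F$ is nonzero, hence separable and its bilinear form is nondegenerate; equivalently the dual basis $\{\bbe_i^*\}$ from the preliminaries exists), so $\sum_t a_t\bga_t = 0$, and by the hypothesized $F$-linear independence of $\{\bga_t\}_{t=1}^s$ all $a_t = 0$. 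This gives independence of the $\ph_t$.

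The main obstacle is simply making the bijection $\beta \mapsto \beta/\bga_t$ between $K$ and $\ker(\ph_t)$ precise — it is a multiplication-by-$\bga_t^{-1}$ isomorphism of $F$-vector spaces, so it is routine, but one must be careful that division by $\bga_t$ is well-defined (each $\bga_t \neq 0$, which follows from $F$-linear independence of the set, assuming $s \ge 1$) and that it really does carry $K$ onto $\ker(\ph_t)$ rather than some other subspace. After that, everything reduces to the nondegeneracy of the trace form plus the elementary codimension count, both of which are standard.
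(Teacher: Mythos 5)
Your proposal is correct, and it takes a genuinely different and cleaner route than the paper's proof. The paper proceeds by induction on $s$: it sets $K_s = \cap_{t=1}^s K/\bga_t$, assumes $\dim_F(K_{s-1}) = \ell-s+1$, and then (i) uses a counting argument over equivalence classes of $F$-linear maps $E \to F$ restricted to $K_{s-1}$ to conclude that $\tr_{\bga_s}(\cdot)$ does not vanish on $K_{s-1}$, giving $\dim_F(K_s) \le \ell - s$; and (ii) applies rank--nullity to the restriction $\sigma = \tr_{\bga_s}|_{K_{s-1}}$ to get $\dim_F(K_s) \ge \ell - s$. You instead identify $K/\bga_t = \ker(\ph_t)$ with $\ph_t = \tr_{\bga_t}(\cdot)$, show the $\ph_t$ are $F$-linearly independent in $\mathrm{Hom}_F(E,F)$ via nondegeneracy of the trace form, and invoke the standard duality fact that $s$ independent functionals on an $\ell$-dimensional space have a common kernel of dimension $\ell - s$. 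The two arguments rest on the same essential input---the injectivity of $\bga \mapsto \tr_{\bga}(\cdot)$, i.e.\ nondegeneracy of the trace pairing, which the paper uses implicitly when it counts the trace functionals vanishing on $K_{s-1}$---but yours surfaces that input explicitly and dispenses with the induction entirely. Your approach buys brevity and conceptual transparency; the paper's is more hands-on and re-derives the codimension count from scratch, which may be preferable if one wishes to keep the exposition elementary. One small point: you should note that the standard codimension fact itself requires a rank--nullity step (the evaluation map $E \to F^s$, $\bal \mapsto (\ph_1(\bal),\dots,\ph_s(\bal))$, is surjective precisely because the $\ph_t$ are independent), but this is indeed textbook material and your outline makes clear you know it.
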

\begin{proof} 
Set $K_s = \cap_{t=1}^s K/\bga_t$. 
We prove that $\dim_F(K_s) = \ell - s$ by induction in $s$.
Clearly, $\dim_F(K_1) = \dim_F(K/\bga_1) = \dim_F(\ker(\tr)) = \ell-1$. 
Suppose that $\dim_F(K_{s-1}) = \ell - s + 1$.
We aim to show that $\dim_F(K_s) = \ell - s$. 

Let $\cL$ be the vector space of all $q^\ell$ linear mappings from $E$ to $F$
and define an equivalence relation $\lra$ on $\cL$ as follows: $f\lra g$ if $f|_{K_{s-1}} \equiv g|_{K_{s-1}}$. As $\dim_F(K_{s-1}) = \ell-s+1$, there are precisely $q^{\ell-s+1}$ distinct linear mappings from $\K_{s-1}$ to $F$. Therefore, there are $q^{\ell-s+1}$ equivalence classes with respect to the relation $\lra$. Each of such classes contains $q^{s-1}$ mappings. 
Hence, there are precisely $q^{s-1}$ linear mappings $f \in \cL$ satisfying $f|_{K_{s-1}} \equiv 0$, which consitute the equivalence class $E_0$ containing the trivial mapping $f \equiv 0$. 

One can easily verify that $\tr_{\bga}(\cdot) \in E_0$ for every $\bga \in \spn\big(\{\bga_t\}_{t=1}^{s-1}\big)$. As $\{\bga_t\}_{t=1}^{s-1}$ is $F$-linearly independent, there are $q^{s-1}$ such trace functionals. Therefore, $E_0 = \left\{\tr_{\bga}(\cdot) \colon \bga \in \spn\big(\{\bga_t\}_{t=1}^{s-1}\big)\right\}$. Hence, for $\bga_s \notin \spn\big(\{\bga_t\}_{t=1}^{s-1}\big)$, we have $\tr_{\bga_s}(\cdot) \notin E_0$. That implies $\ker(\tr_{\bga_s}) \not\supseteq K_{s-1}$. Equivalently, $K/\bga_s \not\supseteq K_{s-1}$. Thus,\vspace{-5pt}
\[
\dim_F(K_s)\hspace{-0.06cm} =\hspace{-0.06cm} \dim_F\bigg( K_{s-1} \bigcap \dfrac{K}{\bga_s} \bigg)\hspace{-0.06cm} <\hspace{-0.06cm} \dim_F(K_{s-1})\hspace{-0.06cm} =\hspace{-0.06cm} \ell-s+1, \vspace{-5pt}
\]
which implies that $\dim_F(K_s) \leq \ell-s$. 
To conclude, it remains to show that $\dim_F(K_s) \geq \ell - s$. 
Indeed, consider the linear mapping $\sigma \colon K_{s-1} \to F$, defined as $\sigma(\bkp) = \tr(\bga_s \bkp)$ for $\bkp \in K_{s-1}$. 
Then 
$\ker(\sigma) = K_{s-1} \cap (K/\bga_s) = K_s$.  
Therefore,  \vspace{-5pt}
\[
\dim_F(K_s) = \dim_F(\ker(\sigma)) \geq \dim_F(K_{s-1}) - 1 = \ell - s.   \vspace{-5pt}
\] 
This completes the proof. 
\end{proof}  

\begin{lemma} 
\label{lem:main}
Suppose $\{\bga_t\}_{t=1}^s \subseteq E$ is an $F$-linearly independent set and $\bxi$ is a primitive element of $E$. Set \vspace{-5pt}
\[
b_j = 
\begin{cases}
0,& \text{ if } \tr(\bga_1\bxi^j) = \cdots = \tr(\bga_s\bxi^j) = 0,\\
1,& \text{ otherwise.}
\end{cases} \vspace{-5pt}
\]
Then we have $\sum_{j = 0}^{q^\ell-2} b_j = q^\ell - q^{\ell-s}$. 
\end{lemma}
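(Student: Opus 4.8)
The key observation is that $\sum_{j=0}^{q^\ell-2} b_j$ is just a counting quantity in disguise. Since $\bxi$ is primitive, the map $j \mapsto \bxi^j$ is a bijection from $\{0,1,\ldots,q^\ell-2\}$ onto $E^* = E \setminus \{0\}$. Hence $\sum_{j} b_j$ equals the number of $\bal \in E^*$ for which \emph{at least one} of $\tr(\bga_1\bal),\ldots,\tr(\bga_s\bal)$ is nonzero. Equivalently, $(q^\ell-1) - \sum_j b_j$ counts the $\bal \in E^*$ with $\tr(\bga_t\bal) = 0$ for all $t \in [s]$. So the plan is to count that complementary set and subtract.

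The first step is to identify the complementary set with an intersection already understood in Lemma~\ref{lem:intersection}. For a fixed $t$, the condition $\tr(\bga_t\bal) = 0$ is equivalent to $\bga_t\bal \in K$, i.e. $\bal \in \bga_t^{-1}K = K/\bga_t$. Therefore
\[
\big\{\bal \in E \colon \tr(\bga_t\bal) = 0 \text{ for all } t \in [s]\big\} = \bigcap_{t=1}^s K/\bga_t,
\]
which is an $F$-subspace of $E$. Since $\{\bga_t\}_{t=1}^s$ is $F$-linearly independent by hypothesis, Lemma~\ref{lem:intersection} gives $\dim_F\big(\cap_{t=1}^s K/\bga_t\big) = \ell - s$, so this intersection has exactly $q^{\ell-s}$ elements.

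The final step is just arithmetic: removing the zero vector, the number of $\bal \in E^*$ satisfying all $s$ trace equations is $q^{\ell-s} - 1$, and consequently
\[
\sum_{j=0}^{q^\ell-2} b_j = (q^\ell - 1) - (q^{\ell-s} - 1) = q^\ell - q^{\ell-s},
\]
as claimed. I do not expect a genuine obstacle here: the statement is essentially a reformulation of Lemma~\ref{lem:intersection}, and the only point requiring a little care is the translation of the simultaneous trace conditions into the intersection $\cap_{t=1}^s K/\bga_t$ together with checking that the linear independence hypothesis of Lemma~\ref{lem:intersection} is exactly the one assumed here.
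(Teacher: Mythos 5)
Your proof is correct and follows essentially the same route as the paper's: count the complementary set $\{j : b_j = 0\}$ by identifying it with the nonzero elements of $\cap_{t=1}^s K/\bga_t$, invoke Lemma~\ref{lem:intersection} for the dimension, and subtract. The only cosmetic difference is that you phrase the reduction via the bijection $j \mapsto \bxi^j$ onto $E^*$ up front, which the paper leaves implicit.
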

\begin{proof} 
It suffices to show that $|\{j \colon b_j\hspace{-0.05cm} =\hspace{-0.05cm} 0\}|\hspace{-0.05cm} =\hspace{-0.05cm} q^{\ell-s}-1$.
We have \vspace{-7pt}
\[
\begin{split}
b_j = 0 &\Llra \tr(\bga_1\bxi^j) = \cdots = \tr(\bga_s\bxi^j) = 0 \Llra \bxi^j \in \bigcap_{t=1}^s \dfrac{K}{\bga_t}, 
\end{split}  \vspace{-20pt}
\]
where $K = \ker(\tr)$. 
According to Lemma~\ref{lem:intersection}, \vspace{-5pt}
\[
\dim_F \bigg(\bigcap_{t=1}^s \dfrac{K}{\bga_t}\bigg) = \ell-s. \vspace{-5pt}
\] 
Therefore, \vspace{-10pt}
\[
|\{j \colon b_j = 0\}| = \left|\left\{j \colon \bxi^j \in \bigcap_{t=1}^s \dfrac{K}{\bga_t}\right\}\right| = q^{\ell-s}-1,  \vspace{-5pt}
\]
as desired. The proof follows. 
\end{proof} 

\begin{theorem}
\label{thm:fixed_basis}
The I/O cost of a rotational repair scheme with column-dimension $s$ for a full-length Reed-Solomon code over $\fql$ is $\ell(q^\ell-q^{\ell-s})$.
\end{theorem}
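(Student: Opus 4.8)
The plan is to start from the closed form in Lemma~\ref{lem:I/O}, which expresses the I/O cost as $\sum_{j \neq j^*} |\supp(\WBjjs)|$, and then use the rotational structure to show that each of the $\ell$ coordinate positions of $E \cong F^\ell$ contributes equally to the total. First I would unwind what it means for a coordinate $m \in [\ell]$ to lie in $\supp(\WBjjs)$: by \eqref{eq:wbg} and \eqref{eq:WB}, since $\wgb = \big(\tr(\bga\bbe_1),\ldots,\tr(\bga\bbe_\ell)\big)$, one has $m \in \supp(\WBjjs)$ exactly when $\tr(\bga\bbe_m) \neq 0$ for some $\bga \in \Sjjs$. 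Fixing an $F$-basis $\{\bga_t\}_{t=1}^s$ of the common subspace $\S$, the set $\{\rho_j\bga_t\}_{t=1}^s$ is an $F$-basis of $\Sjjs = \rho_j\S$, so by Lemma~\ref{lem:wgb_basis}(b) this condition simplifies to: $m \in \supp(\WBjjs)$ if and only if $\tr(\rho_j\bga_t\bbe_m) \neq 0$ for at least one $t \in [s]$.

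Next I would write the I/O cost as a double sum over $j \neq j^*$ and $m \in [\ell]$ of the indicator of that event, and interchange the order of summation. For a fixed coordinate $m$, the defining property of a rotational scheme is that as $j$ ranges over $[n] \setminus \{j^*\}$ the multiplier $\rho_j$ ranges over all of $E^*$; hence the inner sum counts the number of $\rho \in E^*$ for which $\tr(\rho\,\bga_t\bbe_m) \neq 0$ for some $t$. The one point requiring a small check is that $\{\bga_t\bbe_m\}_{t=1}^s$ is still $F$-linearly independent: this holds because $\bbe_m \neq 0$ and multiplication by a nonzero element of $E$ is an $F$-linear automorphism. Writing $E^* = \{\bxi^j : 0 \le j \le q^\ell - 2\}$ for a primitive element $\bxi$, Lemma~\ref{lem:main} applied to the independent set $\{\bga_t\bbe_m\}_{t=1}^s$ gives that this inner count equals $q^\ell - q^{\ell-s}$, independently of the coordinate $m$.

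Summing over the $\ell$ coordinates $m = 1, \ldots, \ell$ then yields the claimed value $\ell(q^\ell - q^{\ell-s})$. I do not expect a serious obstacle: the substantive combinatorial content is already packaged in Lemma~\ref{lem:intersection} and Lemma~\ref{lem:main}, and what remains is bookkeeping — translating $\supp(\WBjjs)$ into trace conditions, justifying the summation swap, and checking that $F$-linear independence is preserved under multiplication by $\bbe_m$. The only place demanding care is to invoke the rotational hypothesis in full strength: not merely that every column-space is a translate of a single $F$-subspace $\S$, but that the multipliers $\{\rho_j\}_{j \neq j^*}$ exhaust $E^*$, since it is exactly this surjectivity that lets the per-coordinate count coincide with the count produced by Lemma~\ref{lem:main}.
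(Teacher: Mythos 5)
Your proposal is correct and follows essentially the same route as the paper's proof: reduce to counting supports via Lemma~\ref{lem:I/O}, express each support condition in terms of trace values $\tr(\rho_j\bga_t\bbe_m)$ via Lemma~\ref{lem:wgb_basis}(b), swap the order of summation over $j$ and $m$, and apply Lemma~\ref{lem:main} to the independent set $\{\bga_t\bbe_m\}_{t=1}^s$ with the rotational hypothesis $\{\rho_j\}_{j\neq j^*} = E^*$ supplying the exact range needed. The one small addition you make — explicitly noting that $F$-linear independence of $\{\bga_t\}$ is preserved under multiplication by the nonzero $\bbe_m$ — is a worthwhile detail that the paper asserts without justification.
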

\begin{proof}
Given a rotational repair scheme based on $\{\bgi\}_{i=1}^\ell$ with column-dimension $s$, according to Lemma~\ref{lem:I/O}, we need to show that $\sum_{j \neq j^*} |\supp(\WBjjs)| = \ell(q^\ell - q^{\ell-s})$. 

To simplify the notation, without loss of generality, we may assume that $j^* = n$ and $\Sjjs = \bxi^{j-1}\S$, $j \in [n-1]$, where $\S$ is an $s$-dimensional $F$-subspace of $E$ and $\bxi$ is a primitive element of $E$. 
Let $\{\bga_t\}_{t=1}^s$ be an $F$-basis of $\S$. Then $\{\bxi^{j-1}\bga_t\}_{t=1}^s$ forms an $F$-basis of $\Sjjs$ for every $j \in [n-1]$. 
Therefore, by Lemma~\ref{lem:wgb_basis}~(b), we have \vspace{-7pt}
\[
\supp(\WBjjs) =\hspace{-0.2cm} \bigcup_{\bga \in \Sjjs}\supp\big(\wgb\big) = \bigcup_{t=1}^s \supp\big(\wxjgtb\big). \vspace{-7pt}
\]
Recall that \vspace{-5pt}
\[
\wxjgtb = \big(\tr(\bxi^{j-1}\bga_t\bbe_1),\ldots,\tr(\bxi^{j-1}\bga_t\bbe_\ell)\big) \in F^\ell.  \vspace{-5pt}
\]
Then $|\supp(\WBjjs)|$ is precisely the number of nonzero columns in the $s\times \ell$ maxtrix $\bW_j$ whose rows are $\wxjgtb$, $t \in [s]$, \vspace{-5pt}
\[
\begin{split}
\bW_j &\define 
\begin{pmatrix}
\wxjgob \\
\hline
\wxjgtwb \\
\hline
\vdots \\
\hline
\wxjgsb
\end{pmatrix}\\
&=
\begingroup 
\setlength\arraycolsep{1pt}
\begin{pmatrix}
\tr(\xi^{j-1}\bga_1\bbe_1) & \cdots & \tr(\xi^{j-1}\bga_1\bbe_i) & \cdots & \tr(\xi^{j-1}\bga_1\bbe_\ell)\\ 
\tr(\xi^{j-1}\bga_2\bbe_1) & \cdots & \tr(\xi^{j-1}\bga_2\bbe_i) & \cdots & \tr(\xi^{j-1}\bga_2\bbe_\ell)\\ 
\vdots & \vdots & \vdots & \vdots & \vdots \\
\tr(\xi^{j-1}\bga_s\bbe_1) & \cdots & \tr(\xi^{j-1}\bga_s\bbe_i) & \cdots & \tr(\xi^{j-1}\bga_s\bbe_\ell) 
\end{pmatrix}.
\endgroup
\end{split}
\]
Therefore, the I/O cost of the repair scheme is equal to the total number of nonzero columns in the matrices $\bW_1,\ldots,\bW_{n-1}$. Thus, setting \vspace{-5pt} 
\[
b_{i,j} =
\begin{cases}
0,& \text{ if } \tr(\bxi^{j-1}\bga_1\bbe_i) = \cdots = \tr(\bxi^{j-1}\bga_s\bbe_i) = 0,\\
1,& \text{ otherwise,}
\end{cases} \vspace{-5pt}
\]
the I/O cost of the repair scheme can be computed as \vspace{-5pt}
\[
\begin{split}
\sum_{j=1}^{n-1} |\supp(\WBjjs)| &= \sum_{j=1}^{n-1}\sum_{i=1}^\ell b_{i,j}
= \sum_{i=1}^\ell\sum_{j=1}^{n-1} b_{i,j}\\
&= \sum_{i=1}^\ell (q^\ell - q^{\ell-s}) = \ell(q^\ell - q^{\ell-s}),
\end{split} \vspace{-5pt}
\]
where the third equality follows by applying Lemma~\ref{lem:main} to the $F$-linearly independent set $\{\bga_t\bbe_i\}_{t=1}^s$ and by setting $b_{j} \define b_{i,j+1}$, $j = 0,\ldots,q^\ell-2 = n-2$. This completes the proof.  
\end{proof}

The bandwidth-optimal repair schemes for full-length RS codes proposed by Dau and Milenkovic~\cite{DauMilenkovic2017}, one of which directly generalizes the scheme proposed by Guruswami and Wootters~\cite{GuruswamiWootters2016}, are both rotational. As a consequence, their I/O costs can be explicitly determined. We conclude that although these schemes achieve optimal repair bandwidth for RS codes, the I/O cost required is as high as that of the naive repair. 

\begin{corollary} 
The repair schemes for full-length Reed-Solomon codes with $n = q^\ell$ and $r = n - k = q^m$, $1 \leq m < \ell$, proposed in~\cite{DauMilenkovic2017}, have the I/O cost $k\ell$ sub-symbols in $F$.
\end{corollary}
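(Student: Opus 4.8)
The plan is to instantiate Theorem~\ref{thm:fixed_basis} for the specific repair schemes of~\cite{DauMilenkovic2017} and then simplify the resulting expression $\ell(q^\ell - q^{\ell-s})$ using the relationship between the scheme's column-dimension $s$ and the code parameters. The key structural fact I would invoke is that the schemes in~\cite{DauMilenkovic2017} are rotational (as asserted in the paragraph preceding the corollary), so that Theorem~\ref{thm:fixed_basis} applies directly once the column-dimension $s$ is identified.

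First I would recall from~\cite{DauMilenkovic2017} that, for a full-length RS code with $n = q^\ell$ and $r = n-k = q^m$ where $1 \le m < \ell$, the bandwidth-optimal repair scheme uses $\ell$ dual codewords whose column-spaces $\Sjjs$ all have a common $F$-dimension, namely $s = \ell - m$. The intuition: the dual of the full-length RS code is $\mathrm{RS}(A, q^m)$, and the repair polynomials are built so that each helper node contributes exactly $\ell - m$ sub-symbols — this is precisely what achieves the cut-set bound, since the repair bandwidth is then $\sum_{j \neq j^*} \dim_F(\Sjjs) = (n-1)(\ell-m)$, and the per-node contribution $\ell - m = \log_q(n/r)$ matches the optimal sub-packetization tradeoff. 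So the column-dimension of these rotational schemes is $s = \ell - m$.

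Next I would simply substitute $s = \ell - m$ into the formula from Theorem~\ref{thm:fixed_basis}: the I/O cost is
\[
\ell\big(q^\ell - q^{\ell - s}\big) = \ell\big(q^\ell - q^{\ell - (\ell - m)}\big) = \ell\big(q^\ell - q^m\big) = \ell(n - r) = k\ell,
\]
using $n = q^\ell$, $r = q^m$, and $k = n - r$. This gives exactly the claimed I/O cost of $k\ell$ sub-symbols in $F$. Finally I would note that $k\ell$ is precisely the total number of sub-symbols stored across the $k$ nonzero ``message'' portions (equivalently, the file size in sub-symbols), which is what the naive repair scheme reads, thereby justifying the remark that the I/O cost is as high as that of the naive repair.

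The main obstacle is the first step: one must verify carefully that the schemes of~\cite{DauMilenkovic2017} genuinely have column-dimension exactly $\ell - m$ and are genuinely rotational in the precise sense of the definition given here — i.e., that the multipliers $\{\rho_j\}_{j \neq j^*}$ exhaust all of $E^*$ and that every column-space is a common $F$-subspace scaled by these multipliers. This is a matter of unpacking the construction in~\cite{DauMilenkovic2017} (where the repair dual codewords are typically of the form $\bga \mapsto (\bga \cdot \mathrm{something}(\bal_j))_j$ with the $\bal_j$ ranging over all of $E$), rather than a computation; once that identification is made, the rest is the one-line substitution above.
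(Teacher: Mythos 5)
Your final substitution $\ell(q^\ell - q^{\ell-s}) = \ell(q^\ell - q^m) = \ell(n-r) = k\ell$ with $s = \ell - m$ is exactly right, and your high-level plan (apply Theorem~\ref{thm:fixed_basis} once rotationality and the column-dimension are established) matches the paper's. But you have explicitly deferred the step that carries essentially all the content of the corollary: verifying that the schemes of~\cite{DauMilenkovic2017} really are rotational in the sense of the paper's definition and really do have column-dimension $\ell - m$. You flag it as ``the main obstacle'' and say it is ``a matter of unpacking the construction,'' but then you don't unpack it, so as written this is a reduction, not a proof. Moreover, the ``intuition'' you offer for $s = \ell - m$ (cut-set bound, optimal per-node contribution $\log_q(n/r)$) is not the actual reason; it is a plausibility argument that could equally well be made for a non-rotational scheme.

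What the paper does at this point is concrete and short. The constructions in~\cite{DauMilenkovic2017} are built from the subspace polynomial $L_W(x) = \prod_{\bom\in W}(x - \bom)$ for an $m$-dimensional $F$-subspace $W$ of $E$. For Construction~III the dual codewords have $j$-th component $L_W\bigl(\bbe_i(\bal_j - \bal_{j^*})\bigr)/(\bal_j - \bal_{j^*})$. Since $\{\bbe_i(\bal_j - \bal_{j^*})\}_{i=1}^{\ell}$ is an $F$-basis of $E$ and $L_W$ is $F$-linear, the numerators span $\S \define \im(L_W)$, which has $\dim_F \S = \ell - m$ (kernel $W$ has dimension $m$). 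Hence $\Sjjs = \S / (\bal_j - \bal_{j^*})$, and because the code is full-length, $\{1/(\bal_j - \bal_{j^*}) : j \neq j^*\} = E^*$. This is exactly the definition of rotational with column-dimension $s = \ell - m$, after which your substitution closes the argument. You need to supply this computation (or the analogous one for Construction~II) to have a complete proof.
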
 
\begin{proof} 
There are two repair schemes presented in~\cite{DauMilenkovic2017}, both of which are based on the subspace polynomial $L_{W}(x) = \prod_{\bom \in W} (x-\bom)$, where $W$ is an $m$-dimensional $F$-subspace of $E$. 
In their Construction~III, the set of dual codewords used to repair $\bcjs$ is given below, where $\{\bbe_i\}_{i=1}^\ell$ is an $F$-basis of $E$.
\[
\bgi\hspace{-0.06cm} =\hspace{-0.06cm} \bigg(\hspace{-0.06cm}\frac{L_W(\bbe_i(\bal_1\hspace{-0.05cm} -\hspace{-0.05cm} \bal_{j^*}))}{\bal_1\hspace{-0.05cm} -\hspace{-0.05cm} \bal_{j^*}},\ldots,\frac{L_W\big(\bbe_i(\bal_n\hspace{-0.05cm} -\hspace{-0.05cm}\bal_{j^*})\big)}{\bal_n\hspace{-0.05cm} -\hspace{-0.05cm}\bal_{j^*}}\hspace{-0.06cm}\bigg),\hspace{-0.06cm}\ i \in [\ell].
\] 
Let $\S = \im(L_W)$, which is an $(\ell-m)$-dimensional $F$-subspace of $E$. The column-spaces in this repair scheme are $(j \neq j^*)$ 
\[
\Sjjs\hspace{-0.05cm} =\hspace{-0.05cm} \spn\bigg(\hspace{-0.07cm}\left\{\hspace{-0.05cm}\dfrac{L_W(\bbe_i(\bal_j\hspace{-0.05cm} -\hspace{-0.05cm} \bal_{j^*}))}{\bal_j\hspace{-0.05cm} -\hspace{-0.05cm} \bal_{j^*}}\colon i \in [\ell]\hspace{-0.05cm}\right\}\hspace{-0.07cm}\bigg)\hspace{-0.05cm}
=\hspace{-0.05cm} \dfrac{\S}{\bal_j\hspace{-0.05cm} -\hspace{-0.05cm}\bal_{j^*}}. 
\]
For the last equality, note that as $\{\bbe_i(\bal_j-\bal_{j^*})\}_{i=1}^\ell$ forms an $F$-basis of $E$ and $L_W$ is a linear mapping from $E$ to itself, the set $\big\{L_W\big(\bbe_i(\bal_j-\bal_{j^*})\big)\big\}_{i=1}^\ell$ indeed spans the subspace $\S = \im(L_W)$. 
As for $n = |E|$ we have \vspace{-3pt}
\[
\left\{1/(\bal_j-\bal_{j^*}) \colon j \in [n] \setminus \{j^*\} \right\} = E^*, \vspace{-3pt}
\]
the corresponding repair scheme is a rotational one with column-dimension $s = \dim_F(\S) = \ell-m$. Thus, according to Theorem~\ref{thm:fixed_basis}, the repair scheme in~\cite[Construction~III]{DauMilenkovic2017} has an I/O cost of 
\[
\ell\big(q^\ell - q^{\ell-(\ell-m)}\big) = \ell\big(q^\ell - q^m\big) = \ell(n-r) = k\ell.
\]
The same conclusion applies to the repair scheme in~\cite[Construction~II]{DauMilenkovic2017} using similar arguments. 
\end{proof} 

\subsection{Flexible Bases}

The choice of bases used to represent finite field elements, which clearly does not affect the repair bandwidth, may have an impact on the I/O cost of the repair scheme. For instance, suppose $j, j^* \in [n]$ such that $j \neq j^*$, and $\R_{j^*}$ is a repair scheme for $\bcjs$. Node~$j$ can easily choose a suitable basis $\B$ that minimizes the amount of data it needs to read according to $\R_{j^*}$ as follows. Let $\Sjjs$ be the column-space of $\R_{j^*}$ and $\{\bga_t\}_{t=1}^s$ one of its $F$-basis.
We can extend this basis of $\Sjjs$ to a basis of $E$, namely $\{\bga_i\}_{i=1}^\ell$, and select $\B = \{\bbe_i\}_{i=1}^\ell$ as its dual, i.e., $\tr(\bga_i\bbe_j) = 1$ if $i=j$ and $0$ otherwise. Then \vspace{-5pt} 
\[
\wgtb\hspace{-0.1cm} =\hspace{-0.1cm} \big(\tr(\bga_t\bbe_1),\ldots,\tr(\bga_t\bbe_\ell) \big)\hspace{-0.1cm} =\hspace{-0.1cm} \bm{e}_t\hspace{-0.1cm} =\hspace{-0.1cm} (\underbrace{0,\ldots,0,1}_{t},0,\ldots,0). \vspace{-5pt}
\]
The number of sub-symbols of $\bcj$ that Node~$j$ has to read is \vspace{-5pt}
\[
\big| \supp\big(\WBjjs\big) \big| = \big|\bigcup_{t=1}^s \supp\big(\wgtb\big) \big| = s = \dim(\Sjjs). \vspace{-5pt}
\]
Note that the I/O cost incurred at a particular node is always bounded from below by the bandwidth used at that node, i.e. $\big| \supp\big(\WBjjs\big) \big| \geq \dim(\Sjjs)$. Therefore, selecting this basis, Node~$j$ is able to minimize the I/O cost incurred in repairing Node~$j^*$. 
This particular choice of basis, however, may not work well for Node~$j$ in the repair process of other nodes. 
Therefore, given a collection of $n$ repair schemes for every node, one could seek to minimize the \emph{average} I/O cost at each storage node in the repair process of all other $n-1$ nodes. The average I/O cost of a collection of repair schemes is defined as follows. \vspace{-5pt} 
\begin{equation}
\label{eq:mop2}
\cI(\R) \define \dfrac{1}{n}\sum_{j=1}^n \min_{\B} \sum_{j^* \neq j} \left|\supp\big(\WBjjs \big)\right|, \vspace{-5pt}
\end{equation}
where $\WBjjs$ is defined as in \eqref{eq:WB} and the collection of repair schemes $\R = \{\R_{j^*}\}_{j^*=1}^n$ is given.
Given that bandwidth is usually the most expensive resource, we find it reasonable to start out with a collection of repair schemes that are bandwidth efficient and then proceed to optimize its average I/O cost. 

\begin{definition}[Symmetric repair schemes]
A collection of $n$ repair schemes $\R=\{\R_{j^*}\}_{j^*=1}^n$ is said to be \emph{symmetric} if $\Sjsj = \Sjjs$, for every $j \neq j^*$, $j,j^*\in[n]$. 
\end{definition} 

\begin{theorem} 
\label{thm:flexible_basis}
The average I/O cost of a symmetric collection of rotational repair schemes with column-dimension $s$ for a Reed-Solomon code of full length $q^\ell$ is $\ell(q^\ell-q^{\ell-s})$.
\end{theorem}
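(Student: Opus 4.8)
The plan is to show that for \emph{every} helper node $j$ and \emph{every} basis $\B$, the inner quantity $\sum_{j^* \neq j} \big|\supp(\WBjjs)\big|$ equals the constant $\ell(q^\ell - q^{\ell-s})$; once this is established, the minimum over $\B$ and then the average over $j$ in \eqref{eq:mop2} are immediate.

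Fix $j \in [n]$. By the definition of $\WBjjs$ in \eqref{eq:WB} together with the symmetry hypothesis $\Sjjs = \Sjsj$, for each $j^* \neq j$ we have the set equality $\WBjjs = \{\wgb \colon \bga \in \Sjsj\}$; in other words, the family of subspaces that governs what node~$j$ reads is exactly the family $\{\Sjsj\}_{j^* \neq j}$ of column-spaces of the \emph{single} repair scheme $\R_j$. Hence
\[
\sum_{j^* \neq j} \big|\supp\big(\WBjjs\big)\big| \;=\; \sum_{j^* \neq j} \Big|\supp\Big(\big\{\wgb \colon \bga \in \Sjsj\big\}\Big)\Big| ,
\]
and the right-hand side is precisely the I/O cost, with respect to $\B$, of the rotational repair scheme $\R_j$, which by assumption has column-dimension $s$.

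By Theorem~\ref{thm:fixed_basis}, this I/O cost equals $\ell(q^\ell - q^{\ell-s})$. The point that matters here is that this value does not depend on the basis $\B$: the statement of Theorem~\ref{thm:fixed_basis} is already basis-free, and inspecting its proof shows why — choosing an $F$-basis $\{\bga_t\}_{t=1}^s$ of the underlying subspace of $\R_j$, the count splits as $\sum_{i=1}^\ell \sum_p b_{i,p}$ where each inner sum is evaluated via Lemma~\ref{lem:main} applied to the $F$-linearly independent set $\{\bga_t \bbe_i\}_{t=1}^s$, and $\B = \{\bbe_i\}$ enters this computation only through each $\bbe_i$ being nonzero. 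Consequently $\min_\B \sum_{j^* \neq j} \big|\supp(\WBjjs)\big| = \ell(q^\ell - q^{\ell-s})$ for every $j \in [n]$.

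Averaging over the $n$ nodes then yields $\cI(\R) = \frac{1}{n}\sum_{j=1}^n \ell(q^\ell - q^{\ell-s}) = \ell(q^\ell - q^{\ell-s})$, as claimed. The only substantive step is the opening observation that symmetry converts the heterogeneous family $\{\Sjjs\}_{j^* \neq j}$ — whose members are column-spaces drawn from $n-1$ different repair schemes — into the column-spaces of the \emph{one} rotational scheme $\R_j$, so that the count of Theorem~\ref{thm:fixed_basis} applies verbatim; I do not anticipate a genuine obstacle beyond recording this reduction and noting that the count is basis-independent.
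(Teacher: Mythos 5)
Your argument is correct and is essentially the paper's own proof, up to a relabeling of the failed/helper indices: symmetry identifies the per-node total $\sum_{j^*\neq j}\big|\supp(\WBjjs)\big|$ with the I/O cost of the single rotational scheme $\R_j$, which Theorem~\ref{thm:fixed_basis} evaluates to $\ell(q^\ell-q^{\ell-s})$ independently of $\B$, rendering the $\min_\B$ and the averaging trivial. Your remark on \emph{why} Theorem~\ref{thm:fixed_basis} is basis-free (each $\bbe_i$ enters Lemma~\ref{lem:main} only through being nonzero) is accurate, though the paper dispatches the same point with the phrase ``regardless of the choice of basis $\B$.''
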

\begin{proof} 
As the collection is symmetric, $\Sjsj = \Sjjs$, which implies $\WBjsj = \WBjjs$. Hence, the total I/O cost incurred at Node~$j^*$ during the repair of all other nodes is
\[
\sum_{j \neq j^*} \big|\supp\big(\WBjsj\big)\big|\hspace{-0.05cm} =\hspace{-0.05cm} \sum_{j \neq j^*} \big|\supp\big(\WBjjs\big)\big|\hspace{-0.05cm} =\hspace{-0.05cm} \ell(q^\ell - q^{\ell-s}),
\]
where the last equality is due to Theorem~\ref{thm:fixed_basis}, regardless of the choice of basis $\B$ at Node~$j^*$. Thus, $\cI(\R) = \ell(q^\ell - q^{\ell-s})$. 
\end{proof} 

Since $\Sjsj = \S/(\bal_{j^*} - \bal_j) = \S/(\bal_j-\bal_{j^*}) = \Sjjs$, the collection of repair schemes proposed in~\cite[Construction III]{DauMilenkovic2017} is symmetric. Hence, even if different storage nodes are allowed to optimize their own bases, the average I/O cost is still $k\ell$. 
The same conclusion holds for~\cite[Construction II]{DauMilenkovic2017}. 

\begin{corollary} 
The collection of repair schemes for full-length Reed-Solomon codes with $n = q^\ell$ and $r = n - k = q^m$, $1 \leq m < \ell$, proposed in~\cite{DauMilenkovic2017}, have the average I/O cost $k\ell$.
\end{corollary}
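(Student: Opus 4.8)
The plan is to derive the Corollary directly from Theorem~\ref{thm:flexible_basis} by checking that the collection of repair schemes from~\cite[Construction~III]{DauMilenkovic2017} satisfies the two hypotheses of that theorem: it is a \emph{symmetric collection} of \emph{rotational} repair schemes, and its common column-dimension is $s = \ell - m$. The rotational property and the value of $s$ were already established in the proof of the earlier Corollary (the one on the fixed-basis I/O cost): there we computed $\Sjjs = \S/(\bal_j - \bal_{j^*})$ with $\S = \im(L_W)$ an $(\ell-m)$-dimensional $F$-subspace, and observed that $\{1/(\bal_j-\bal_{j^*}) : j \neq j^*\} = E^*$ since the code is full length. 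So the only genuinely new thing to verify is symmetry.

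First I would recall the definition: a collection $\R = \{\R_{j^*}\}_{j^*=1}^n$ is symmetric if $\Sjsj = \Sjjs$ for all $j \neq j^*$. Using the explicit formula $\Sjjs = \S/(\bal_j - \bal_{j^*})$, symmetry amounts to the identity $\S/(\bal_{j^*} - \bal_j) = \S/(\bal_j - \bal_{j^*})$. Since $\bal_{j^*} - \bal_j = -(\bal_j - \bal_{j^*})$ and $\S$ is an $F$-subspace (hence closed under multiplication by the scalar $-1 \in F$), dividing by $\bal_{j^*}-\bal_j$ or by $\bal_j-\bal_{j^*}$ gives the same subspace. This is exactly the one-line argument already sketched in the paragraph preceding the Corollary, so the write-up is essentially a matter of assembling these observations.

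Then the conclusion follows by plugging $q^\ell$ for the length, $s = \ell - m$ for the column-dimension, into Theorem~\ref{thm:flexible_basis}: the average I/O cost is $\ell(q^\ell - q^{\ell-s}) = \ell(q^\ell - q^{\ell-(\ell-m)}) = \ell(q^\ell - q^m) = \ell(n - r) = \ell k = k\ell$. For~\cite[Construction~II]{DauMilenkovic2017} I would remark that the column-spaces there have the same form $\S'/(\bal_j - \bal_{j^*})$ for the relevant image subspace $\S'$ (as used in the earlier Corollary's proof), so the identical reasoning applies, giving the same average I/O cost $k\ell$.

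There is no real obstacle here: every ingredient — the rotational structure, the column-dimension, the full-length condition ensuring the multipliers range over all of $E^*$, and Theorem~\ref{thm:flexible_basis} itself — is already in hand. The only point requiring a moment's care is making the symmetry check rigorous, i.e. being explicit that $\S = -\S$ because $\S$ is an $F$-vector space and $-1 \in F$; but this is immediate and needs no computation. So the proof is short, and the write-up should simply state the symmetry verification, invoke Theorem~\ref{thm:flexible_basis}, and carry out the trivial arithmetic simplification to reach $k\ell$.

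\begin{proof}
We verify the hypotheses of Theorem~\ref{thm:flexible_basis}. By the proof of the previous corollary, the repair scheme $\R_{j^*}$ from~\cite[Construction~III]{DauMilenkovic2017} is rotational with column-spaces $\Sjjs = \S/(\bal_j - \bal_{j^*})$ for $j \neq j^*$, where $\S = \im(L_W)$ is an $(\ell-m)$-dimensional $F$-subspace of $E$; in particular its column-dimension is $s = \ell - m$. It remains to show that the collection $\R = \{\R_{j^*}\}_{j^*=1}^n$ is symmetric. Since $\S$ is an $F$-subspace and $-1 \in F$, we have $-\S = \S$, and therefore
\[
\Sjsj = \dfrac{\S}{\bal_{j^*} - \bal_j} = \dfrac{\S}{-(\bal_j - \bal_{j^*})} = \dfrac{-\S}{\bal_j - \bal_{j^*}} = \dfrac{\S}{\bal_j - \bal_{j^*}} = \Sjjs,
\]
for every $j \neq j^*$, so $\R$ is symmetric. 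Applying Theorem~\ref{thm:flexible_basis} with full length $q^\ell$ and column-dimension $s = \ell - m$, the average I/O cost is
\[
\ell\big(q^\ell - q^{\ell - s}\big) = \ell\big(q^\ell - q^{\ell - (\ell - m)}\big) = \ell\big(q^\ell - q^m\big) = \ell(n - r) = k\ell.
\]
The same argument applies to~\cite[Construction~II]{DauMilenkovic2017}, whose column-spaces likewise have the form $\S'/(\bal_j - \bal_{j^*})$ for the corresponding image subspace $\S'$, yielding an average I/O cost of $k\ell$ as well.
\end{proof}
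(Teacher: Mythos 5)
Your proposal is correct and follows essentially the same route as the paper: the paragraph preceding the corollary establishes symmetry via $\Sjsj = \S/(\bal_{j^*}-\bal_j) = \S/(\bal_j-\bal_{j^*}) = \Sjjs$ and then invokes Theorem~\ref{thm:flexible_basis} together with the rotational structure and column-dimension already computed in the fixed-basis corollary. Your only addition is to make explicit the step $-\S = \S$ (because $\S$ is an $F$-subspace and $-1 \in F$), which the paper leaves implicit; this is a fine, harmless clarification.
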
 

\section{Bandwidth Optimality Requires High I/O Cost}
\label{sec:converse}

In this section, we show that when $r = q = 2$, every bandwidth-optimal linear repair scheme for a full-length RS code over $\fql$ must be rotational, which 
in turn implies that high I/O cost is \emph{necessary} to achieve optimal bandwidth. Note that as proved in~\cite{GuruswamiWootters2016}, every linear repair scheme for an RS code can be described as in Section~\ref{sec:pre}.
A fixed basis is assumed. 

A characterization of rotational repair schemes with column-dimension $s = \ell-1$ is presented in Lemma~\ref{lem:rotational_characterization}. 

\begin{lemma} 
\label{lem:rotational_characterization}
A linear repair scheme for a full-length RS code over $\fql$ is rotational with column-dimension $\ell-1$ if and only if every $(\ell-1)$-dimensional subspace of $\fql$ appears among the column-spaces of the scheme exactly $q-1$ times. 
\end{lemma}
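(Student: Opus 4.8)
The plan is to recast the statement in terms of the multiplicative action of $E^* = E\setminus\{0\}$ on the set $\mathcal{H}$ of hyperplanes (the $(\ell-1)$-dimensional $F$-subspaces) of $E\cong F^\ell$, using only that full length forces $n-1 = q^\ell-1 = |E^*|$ helper nodes, while $|\mathcal{H}| = (q^\ell-1)/(q-1)$. I would first record two elementary facts. (i) By non-degeneracy of the trace form, $\lambda\mapsto\big(x\mapsto\tr(\lambda x)\big)$ is an $F$-isomorphism of $E$ onto its $F$-dual; writing $H_\lambda := \{x\in E : \tr(\lambda x)=0\}$, every hyperplane equals $H_\lambda$ for some $\lambda\in E^*$, and $H_\lambda = H_\mu$ iff $\mu\in F^*\lambda$, so each hyperplane is $H_\lambda$ for exactly $q-1$ values of $\lambda$. (ii) For $\rho\in E^*$ one has $\rho H_\lambda = H_{\lambda/\rho}$, so $E^*$ permutes $\mathcal{H}$; and for a fixed hyperplane $\S$ the set $R := \{0\}\cup\{\rho\in E^* : \rho\S=\S\}$ is closed under addition and multiplication — if $\rho_1\S=\rho_2\S=\S$ then $(\rho_1+\rho_2)\S\subseteq\rho_1\S+\rho_2\S=\S$, and since a nonzero scalar acts as an $F$-automorphism of $E$, $(\rho_1+\rho_2)\S$ has the same dimension as $\S$, hence equals $\S$ (or is $\{0\}$ if $\rho_1+\rho_2=0$) — so $R$ is a subfield of $E$. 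As $F^*\subseteq R$ (scalars fix $F$-subspaces), $R=\mathbb{F}_{q^d}$ with $d\mid\ell$; but $\S$ is then an $\mathbb{F}_{q^d}$-space of $F$-dimension $\ell-1$, so $d\mid\ell-1$ as well, and $\gcd(\ell,\ell-1)=1$ forces $d=1$, i.e. $R=F$. Thus the stabilizer of any hyperplane under the $E^*$-action is exactly $F^*$, and every orbit has size $|E^*|/|F^*| = (q^\ell-1)/(q-1) = |\mathcal{H}|$.

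\textbf{The two directions.} For the forward implication, suppose the scheme is rotational with column-dimension $\ell-1$, witnessed by an $(\ell-1)$-dimensional $\S$ with $\Sjjs = \rho_j\S$ for $j\neq j^*$ and $\{\rho_j\}_{j\neq j^*}=E^*$; comparing cardinalities, $j\mapsto\rho_j$ is a bijection onto $E^*$, so the multiset of column-spaces is $\{\rho\S : \rho\in E^*\}$. By (ii), $\rho\mapsto\rho\S$ is $(q-1)$-to-one with image the orbit of $\S$, which has size $(q^\ell-1)/(q-1)=|\mathcal{H}|$; hence each hyperplane occurs exactly $q-1$ times, as claimed. For the converse, assume each hyperplane occurs exactly $q-1$ times among $\{\Sjjs\}_{j\neq j^*}$. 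Summing over $\mathcal{H}$ accounts for $(q-1)|\mathcal{H}| = q^\ell-1 = n-1$ indices, so every $\Sjjs$ ($j\neq j^*$) is in fact a hyperplane. Then within each hyperplane $H$ I would match bijectively the $q-1$ indices $j$ with $\Sjjs=H$ with the $q-1$ scalars $\lambda$ satisfying $H_\lambda=H$; doing this over all $H$ yields a bijection $j\mapsto\lambda_j$ from $[n]\setminus\{j^*\}$ onto $E^*$ with $\Sjjs=H_{\lambda_j}$. Fixing any $j_0\neq j^*$ and setting $\S:=\S_{j_0\to j^*}=H_{\lambda_{j_0}}$, $\rho_j:=\lambda_{j_0}\lambda_j^{-1}$, one gets $\rho_j\S = H_{\lambda_{j_0}/\rho_j} = H_{\lambda_j} = \Sjjs$ and $\{\rho_j\}_{j\neq j^*} = \lambda_{j_0}E^* = E^*$, so the scheme is rotational with column-dimension $\dim_F\S=\ell-1$.

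\textbf{Main obstacle.} The only step that is not pure bookkeeping is fact (ii): one must notice that the stabilizer of a hyperplane of $\fql$ (together with $0$) is a subfield, and then exploit the arithmetic accident $\gcd(\ell,\ell-1)=1$ to pin it down to $F^*$. I expect that to be the crux; once it is in place, both directions are short counting arguments, and the Reed–Solomon hypothesis enters only through the identity $n-1=q^\ell-1$, which is exactly what \emph{full length} provides.
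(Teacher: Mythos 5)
Your proof is correct and follows the same underlying idea as the paper's, namely that the multiplicative action of $E^*$ on the set of hyperplanes of $E$ has $F^*$ as the stabilizer of every hyperplane, so the multiset $\{\rho\S : \rho\in E^*\}$ covers every $(\ell-1)$-dimensional subspace exactly $q-1$ times. The paper's proof of this lemma is a two-sentence sketch: it counts hyperplanes and then simply asserts the $(q-1)$-fold covering property, leaving the stabilizer computation and the entire converse direction implicit. You supply both. Two remarks. First, your subfield argument for the stabilizer (the set $R = \{0\}\cup\{\rho : \rho\S=\S\}$ is a subfield containing $F$, and $d\mid\gcd(\ell,\ell-1)=1$ forces $R=F$) is correct and somewhat elegant, but it is redundant given your own fact (i): once you know $\rho H_\lambda = H_{\lambda/\rho}$ and that $H_\lambda = H_\mu$ iff $\mu\in F^*\lambda$, the stabilizer of $H_\lambda$ is $\{\rho : \lambda/\rho\in F^*\lambda\} = F^*$ in one line. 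The subfield route would be the one to reach for if you did not have the trace-dual parametrization at hand, so it buys robustness but not economy here. Second, your converse direction correctly observes the counting identity $(q-1)\cdot\frac{q^\ell-1}{q-1} = n-1$, which is exactly where the full-length hypothesis enters and which the paper does not spell out; this step is genuinely needed to conclude that \emph{all} column-spaces are hyperplanes before you can build the bijection $j\mapsto\rho_j$. Overall, a more careful version of the same argument.
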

\begin{proof} 
Note that there are precisely $(q^\ell-1)/(q-1)$ $\fq$-subspaces of $\fql$ of dimension $\ell-1$.
Therefore, for any $\fq$-subspace $\S$ of dimension $\ell-1$, the collection of $\fq$-subspaces $\{\bga \S \colon \bga \in \fql^*\}$ covers each $(\ell-1)$-dimensional $\fq$-subspace of $\fql$ precisely $q-1$ times.  
That explains the lemma. 
\end{proof}

Lemma~\ref{lem:simplified} states the fact that for full-length RS codes, 
to study repair bandwidth and I/O cost, it suffices to just examine repair schemes for the first component $\bco = f(0)$, $\deg(f) < k$.
This will significantly simplify our study.  
Recall that the dual of a full-length RS code is another RS code with dimension $r$. 

\begin{lemma} 
\label{lem:simplified}
Let $\bgi = \big(g_i(\bal_1),\ldots,g_i(\bal_n)\big)$, $i\in [\ell]$, where $n = q^\ell$, $\fql = \{0=\bal_1,\bal_2,\ldots,\bal_n\}$, and $g_i(x) \in \fql[x]$ are polynomials of degree at most $r-1$. 
Let $h_i(x) \define g_i(x+\bal_{j^*})$ and $\bhi = \big(h_i(\bal_1),\ldots,h_i(\bal_n)\big)$. Then $\{\bgi\}_{i=1}^\ell$ forms a repair scheme for $\bcjs$ if and only if 
  $\{\bhi\}_{i=1}^\ell$ forms a repair scheme for $\bco$ and moreover, these two schemes will have the same repair bandwidth and I/O cost. 
\end{lemma}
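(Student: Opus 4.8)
The plan is to exploit the substitution $h_i(x) = g_i(x + \bal_{j^*})$, which implements the affine change of evaluation points $\bal_j \mapsto \bal_j - \bal_{j^*}$. First I would observe that since the RS code $\rsnk$ is full length, the map $j \mapsto \bal_j - \bal_{j^*}$ is a permutation of $\fql$, sending $j^*$ to the index of $0$; in the notation of the lemma, $0 = \bal_1$, so $h_i$ evaluated over all of $\fql$ is just a reindexing of $g_i$ evaluated over all of $\fql$, with the entry at position $1$ equal to $g_i(\bal_{j^*})$. The key algebraic point is that $\deg(h_i) = \deg(g_i) \le r-1$, so $\bhi$ is again a codeword of the dual code $\rsnk$. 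Thus $\{\bgi\}_{i=1}^\ell$ being a valid set of dual codewords for the trace-repair framework is equivalent to $\{\bhi\}_{i=1}^\ell$ being one.

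Next I would verify the defining condition of a repair scheme is preserved. By \eqref{eq:repair_equations} and \eqref{eq:S}, $\{\bgi\}$ repairs $\bcjs$ iff $\dim_F(\{g_i(\bal_{j^*})\}_{i=1}^\ell) = \ell$, and $\{\bhi\}$ repairs $\bco = f(0)$ iff $\dim_F(\{h_i(\bal_1)\}_{i=1}^\ell) = \dim_F(\{h_i(0)\}_{i=1}^\ell) = \ell$. But $h_i(0) = g_i(\bal_{j^*})$ by definition, so these two full-rank conditions are literally the same condition. This establishes the ``if and only if'' for being a repair scheme.

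For the bandwidth and I/O cost claims, I would compare the column-spaces of the two schemes entrywise. For the scheme $\{\bhi\}$ repairing $\bco$, the column-space indexed by $j' \ne 1$ is $\S_{j' \to 1}^{\{\bhi\}} = \spn(\{h_i(\bal_{j'})\}_{i=1}^\ell) = \spn(\{g_i(\bal_{j'} + \bal_{j^*})\}_{i=1}^\ell)$. Writing $\bal_{j'} + \bal_{j^*} = \bal_j$ for the unique $j$ with this property (and $j \ne j^*$ since $j' \ne 1$ means $\bal_{j'} \ne 0$), this equals $\spn(\{g_i(\bal_j)\}_{i=1}^\ell) = \Sjjs$, the column-space of the original scheme. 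So the multiset of column-spaces of the two schemes coincides. By Lemma~\ref{lem:GW} the repair bandwidth is $\sum \dim_F$ of these column-spaces, hence equal; by Lemma~\ref{lem:I/O} the I/O cost is $\sum |\supp(\WBjjs)|$, which depends only on the column-space $\Sjjs$ (via \eqref{eq:WB}) and the fixed basis $\B$, hence also equal. This completes the proof.

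The step requiring the most care is the bookkeeping of the index permutation: one must check that $\bal_j \mapsto \bal_j - \bal_{j^*}$ genuinely sends the ``erased'' index $j^*$ to the index of $0$ and is a bijection on the remaining $n-1$ indices, so that the sums over $j \ne j^*$ and over $j' \ne 1$ match term by term. There is no real obstacle here — it is a direct consequence of full length $n = |E|$ — but it is the one place where the argument would break if the code were not full length, so it deserves an explicit sentence.
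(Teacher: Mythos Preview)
Your proposal is correct and follows essentially the same approach as the paper's proof: both hinge on the observation that $h_i(0) = g_i(\bal_{j^*})$ gives the equivalence of the full-rank conditions, and that full length makes the map $\bal_j \mapsto \bal_j + \bal_{j^*}$ a permutation of the evaluation points, so the multiset of column-spaces is merely rearranged. Your write-up is in fact more explicit than the paper's (you spell out that $\deg(h_i)\le r-1$ keeps $\bhi$ in the dual code, and you cite Lemma~\ref{lem:GW} and Lemma~\ref{lem:I/O} for the bandwidth and I/O conclusions), but the argument is the same.
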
 
\begin{proof} 
Since $h_i(\bal_1) = h_i(0) = g_i(\bal_{j^*})$, the set $\{g_i(\bal_{j^*})\}_{i=1}^\ell$ is an $\fq$-basis of $\fql$ if and only if the set $\{h_i(\bal_1)\}_{i=1}^\ell$ is an $\fq$-basis of $\fql$. This explains the first statement of the lemma. 
For the second statement on repair bandwidth and I/O cost, note that as the code is full length, we have $\{\bal_1,\ldots,\bal_n\} \equiv \fql$. 
Moreover, $h_i(\bal_j) = g_i(\bal_j + \bal_{j^*})$, for $j \in [n]$. 
Hence, the collection of column-spaces of the repair scheme for $\bco$ based on $\{\bhi\}_{i=1}^\ell$ is simply a rearrangement of the column-spaces of the repair scheme for $\bcjs$ based on $\{\bgi\}_{i=1}^\ell$.
\end{proof} 

\begin{lemma} 
\label{lem:basis_transform}
Suppose $B = \{\bb_i\}_{i=1}^\ell$ is an $\fq$-basis of $\fql$ while
$A = \{\ba_i\}_{i=1}^\ell$ is not.
Then there exists $\bga \in \fql^*$ so that $A + \bga B \define \{\ba_i+\bga\bb_i\}_{i=1}^\ell$ is also an $\fq$-basis of $\fql$.  
\end{lemma}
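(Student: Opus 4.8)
The plan is to recast the statement as a question about $\fq$-linear endomorphisms of $\fql$ and settle it by a double count. Since $B=\{\bb_i\}_{i=1}^\ell$ is an $\fq$-basis, there is a unique $\fq$-linear map $L\colon\fql\to\fql$ with $L(\bb_i)=\ba_i$ for all $i$; because $A=\{\ba_i\}_{i=1}^\ell$ is \emph{not} a basis, $L$ is not invertible, so $\ker L\neq\{0\}$. For $\bga\in\fql$ write $M_\bga\colon\fql\to\fql$ for multiplication by $\bga$, which is $\fq$-linear and invertible exactly when $\bga\neq 0$. The point is that $\ba_i+\bga\bb_i=(L+M_\bga)(\bb_i)$, so $A+\bga B$ is an $\fq$-basis of $\fql$ if and only if $L+M_\bga$ is invertible. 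Hence it suffices to produce $\bga\in\fql^*$ with $\ker(L+M_\bga)=\{0\}$.

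To find such a $\bga$, I would count the number $N$ of pairs $(\bv,\bga)\in\fql^*\times\fql$ satisfying $L(\bv)=-\bga\bv$ in two ways. Grouping by $\bv$: for each $\bv\in\fql^*$ the equation determines $\bga=-L(\bv)\bv^{-1}$ uniquely, so $N=q^\ell-1$. Grouping by $\bga$: the admissible $\bv$ are precisely the nonzero elements of $\ker(L+M_\bga)$, so, writing $d_\bga\define\dim_\fq\ker(L+M_\bga)$,
\[
N=\sum_{\bga\in\fql}\big(q^{d_\bga}-1\big).
\]
If $L+M_\bga$ were singular for \emph{every} $\bga\in\fql$, then $d_\bga\geq 1$, so each of the $q^\ell$ summands is at least $q-1\geq 1$, giving $N\geq q^\ell>q^\ell-1$, a contradiction. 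Therefore $L+M_\bga$ is invertible for some $\bga$; since $L+M_0=L$ is singular we must have $\bga\neq 0$, and then $A+\bga B=\{(L+M_\bga)(\bb_i)\}_{i=1}^\ell$ is an $\fq$-basis, which is exactly what we want.

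I do not expect a genuine obstacle here: the reformulation in terms of $L+M_\bga$ is immediate and the rest is elementary linear algebra over $\fq$. The only place that needs a little care is the counting step, specifically noticing that the \emph{refined} count over kernel dimensions is what makes the argument go through uniformly in $q$ --- in particular for $q=2$, where the cruder remark ``the coset $L+\{M_\bga\colon\bga\in\fql\}$ contains at most $q^\ell-1$ singular maps'' does not by itself force an invertible member to exist.
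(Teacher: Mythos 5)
Your proof is correct and is essentially the same as the paper's: your linear map $L$ is the paper's $\tau$, and your double count over pairs $(\bv,\bga)$ is just a repackaging of the paper's observation that the set $C=\{-\bu^{-1}\tau(\bu)\colon\bu\in\fql^*\}$ of singular multipliers has size at most $q^\ell-1$ and contains $0$, so some nonzero $\bga$ lies outside it.
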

\begin{proof} 
Set $\S_A = \mathsf{span}_{\fq}(A)$ and $\tau$ a mapping from $\fql$ to $\S_A$ defined by $\tau(\sum_{i=1}^\ell \eta_i\bb_i) = \sum_{i=1}^\ell \eta_i\ba_i$, for every $\eta_i \in \fq$. 
Since $A$ is linearly dependent over $\fq$, there exists $\bu' \in \fql^*$ such that $\tau(\bu') = 0$. Therefore, the set $C \define \{-\bu^{-1}\tau(\bu) \colon \bu \in \fql^*\}$ contains $0$. Moreover, it is clear that $|C| \leq |\fql^*| = q^\ell-1$. Therefore, there exists a nonzero element $\bga \not\in C$. We now show that $A + \bga B$ is linearly independent over $\fq$. 
Indeed, it suffices to show that for every $(\eta_1,\ldots,\eta_\ell) \not\equiv
(0,\ldots,0)$, we have $\sum_{i=1}^\ell \eta_i (\ba_i + \bga\bb_i) \neq 0$.
Let $\bu = \sum_{i=1}^\ell \eta_i\bb_i$, then because $B$ is a basis, $\bu \neq 0$. As $\bga \not\in C$, we have $\bga\bu \neq -\tau(\bu)$, which implies that 
$\sum_{i=1}^\ell \eta_i (\ba_i + \bga\bb_i) \neq 0$, as desired.
\end{proof} 

Lemma~\ref{lem:column_space} is due to Proposition~1 and Corollary~1 in~\cite{DauMilenkovic2017}.\vspace{-3pt}

\begin{lemma} 
\label{lem:column_space}
In every bandwidth-optimal linear repair scheme for a full-length Reed-Solomon code with $n = q^\ell$, $r = q^m$, and $m \in [\ell-1]$, 
the column-spaces all have dimension $\ell-m$. 
\end{lemma}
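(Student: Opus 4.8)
The plan is to reduce, via Lemma~\ref{lem:simplified}, to repairing the coordinate $\bco = f(0)$, to re-express the $F$-dimension of each column-space as the corank of an evaluation map on a fixed $\ell$-dimensional $F$-space of low-degree polynomials, and then to combine a root-counting estimate with the concavity of $\log_q$ to identify both the optimal bandwidth and its equality case.

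First I would apply Lemma~\ref{lem:simplified} to assume $j^* = 1$ with $\bal_1 = 0$, so the repair concerns $\bco$. Since the dual of a full-length RS code is an RS code of dimension $r = q^m$, the repairing dual codewords are $\bgi = (g_i(\bal_1),\dots,g_i(\bal_n))$ with $\deg g_i \le r-1 = q^m-1$. Put $V = \spn\big(\{g_1,\dots,g_\ell\}\big) \subseteq E[x]$. The defining condition $\dim_F\big(\spn(\{g_i(0)\}_{i=1}^\ell)\big) = \ell$ says $\{g_i(0)\}_{i=1}^\ell$ is an $F$-basis of $E$; hence the $g_i$ are $F$-linearly independent (so $\dim_F V = \ell$) and the evaluation-at-$0$ map $V\to E$ is injective, i.e.\ no nonzero element of $V$ vanishes at $0$.

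Next, for $\bal \in E$ let $\mathrm{ev}_\bal \colon V \to E$ be the $F$-linear map $g \mapsto g(\bal)$ and $V_\bal = \ker(\mathrm{ev}_\bal)$. For $\bal = \bal_j$ with $j \neq 1$, the image of $\mathrm{ev}_\bal$ is exactly $\Sjo$, so $\dim_F \Sjo = \ell - \dim_F V_\bal$, and the repair bandwidth equals $(q^\ell-1)\ell - \sum_{\bal \in E^*}\dim_F V_\bal$. I would then double-count the incidences $\{(\bal,g)\colon \bal \in E^*,\ g \in V,\ g(\bal)=0\}$: each nonzero $g\in V$ has at most $q^m-1$ roots and none at $0$, so $\sum_{\bal\in E^*}|V_\bal| = (q^\ell-1) + \sum_{g\in V\setminus\{0\}}\big|\{\bal\in E^*\colon g(\bal)=0\}\big| \le (q^\ell-1)q^m$. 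Concavity of $\log_q$ then gives $\tfrac{1}{q^\ell-1}\sum_{\bal\in E^*}\dim_F V_\bal = \tfrac{1}{q^\ell-1}\sum_{\bal\in E^*}\log_q|V_\bal| \le \log_q\!\big(\tfrac{1}{q^\ell-1}\sum_{\bal\in E^*}|V_\bal|\big) \le m$, whence the bandwidth is at least $(q^\ell-1)(\ell-m)$. Since this value is attained by the constructions of~\cite{DauMilenkovic2017}, a bandwidth-optimal scheme forces equality throughout; in particular concavity is tight, so all $|V_\bal|$ with $\bal\in E^*$ coincide, their common value must be $q^m$, and therefore $\dim_F V_\bal = m$, i.e.\ $\dim_F \Sjo = \ell-m$, for every $j \neq 1$. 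Finally, undoing the reduction of Lemma~\ref{lem:simplified} transfers this to an arbitrary failed node $j^*$.

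The step I expect to be the main obstacle is the equality analysis: one must check that optimality simultaneously forces tightness in the root count and in the concavity inequality, and that the two together really pin down every individual column-dimension to $\ell-m$, not merely the average. A minor point to handle with care is the round trip through Lemma~\ref{lem:simplified}, ensuring that the multiset of column-spaces of the shifted scheme is literally a re-indexing of that of the original, so that the conclusion "all column-dimensions equal $\ell-m$" is preserved. The remaining ingredients --- the degree bound $q^m-1$ on dual polynomials, $\dim_F V = \ell$, and $V_0 = \{0\}$ --- are routine. (This is essentially the content of Proposition~1 and Corollary~1 of~\cite{DauMilenkovic2017}.)
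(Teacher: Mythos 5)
Your reconstruction is correct, and it indeed reproduces the argument the paper is leaning on: the paper itself supplies no proof of Lemma~\ref{lem:column_space} but simply attributes it to Proposition~1 and Corollary~1 of \cite{DauMilenkovic2017}, whose content is exactly the reduction to $j^*=1$, the rank--nullity identity $\dim_F \Sjo = \ell - \dim_F V_{\bal_j}$ on the $\ell$-dimensional $F$-space $V$ of repair polynomials, the root-counting bound $\sum_{\bal\in E^*} |V_\bal| \le (q^\ell-1)q^m$ (using $\deg g \le q^m-1$ and $V_0 = \{0\}$), and the Jensen/concavity step that yields the bandwidth lower bound $(q^\ell-1)(\ell-m)$ together with its equality case. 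One small remark on the equality analysis: tightness of Jensen alone (strict concavity of $\log_q$) already forces every $|V_\bal|$, $\bal\in E^*$, to equal a common power $q^d$, and then $\sum_{\bal\in E^*}\dim_F V_\bal = (q^\ell-1)m$ forces $d=m$; so you do not separately need tightness of the root-count inequality to pin down every column-dimension to $\ell-m$, although it is also tight. The round trip through Lemma~\ref{lem:simplified} is handled correctly, since that lemma guarantees the column-spaces of the shifted scheme are a permutation of those of the original.
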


We are now ready to prove the main theorem of this section. 
\begin{theorem} 
\label{thm:converse}
Every bandwidth-optimal linear repair scheme for full-length Reed-Solomon codes with $n = 2^\ell$ and $r = 2$ must be rotational. Thus, such a scheme
must incur an I/O cost $k\ell$. 
\end{theorem}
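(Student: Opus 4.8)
The plan is to reduce the whole problem to a statement about a single $\ft$-linear endomorphism of $\ftl$, and then to exploit the feature — special to $q=2$ — that $\ftl$ has exactly $n-1$ hyperplanes.

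\textbf{Reduction and set-up.} By Lemma~\ref{lem:simplified} I would pass to a repair scheme for $\bco$, i.e.\ assume $\bal_{j^*}=\bal_1=0$; since rotationality depends only on the (multi)set of column-spaces and Lemma~\ref{lem:simplified} shows the two schemes have the same column-spaces up to a rearrangement, the original scheme is rotational iff this one is, and the two have equal I/O cost. The dual of a full-length RS code is an RS code of dimension $r=2$, so every dual codeword comes from $g_i(x)=a_ix+b_i$ with $\deg\le 1$, and being a repair scheme for $\bco$ forces $\{b_i=g_i(0)\}_{i=1}^\ell$ to be an $\ft$-basis of $\ftl$. I would define the $\ft$-linear map $T\colon\ftl\to\ftl$ by $T(b_i)=a_i$. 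Writing $u=\sum_i e_ib_i$ (which ranges over all of $\ftl$) gives $\sum_i e_ia_i=T(u)$, hence for $j\neq 1$ the column-space is $\Sjo=\spn\{g_i(\bal_j)\}_{i=1}^\ell=\{\bal_j\,T(u)+u:u\in\ftl\}$, and, the code being full length, $\{\bal_j:j\neq 1\}=\ftl^*$.

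\textbf{Invertibility of $T$, and distinctness of the column-spaces.} By bandwidth-optimality and Lemma~\ref{lem:column_space} (with $m=1$) each $\Sjo$ has dimension $\ell-1$, so $\{u:\bal_jT(u)=u\}$ is one-dimensional — hence has exactly one nonzero element, as $q=2$ — for every $j\neq 1$. I would then count the triples, rather pairs, $(\bal_j,u)$ with $j\neq1$, $u\neq0$, $\bal_jT(u)=u$ in two ways: grouping by $\bal_j$ gives $n-1$; grouping by $u$, each $u$ with $T(u)\neq0$ contributes the unique partner $\bal_j=u/T(u)$ and each $u\in\ker T\setminus\{0\}$ contributes none, so the count is $|\ftl|-|\ker T|$. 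Equating forces $|\ker T|=1$, i.e.\ $T$ is invertible. Next, to show $j\mapsto\Sjo$ is injective, suppose $\Sjo=\S_{j'\to 1}$ for some $j\neq j'$, and write this common hyperplane as $\{v:\tr(\beta v)=0\}$ with $\beta\neq0$. Then $\tr(\beta(\bal_jT(u)+u))=0$ and $\tr(\beta(\bal_{j'}T(u)+u))=0$ for all $u$; adding these (characteristic $2$) yields $\tr(\beta(\bal_j+\bal_{j'})T(u))=0$ for all $u$, and since $T$ is onto, $\tr(\beta(\bal_j+\bal_{j'})v)=0$ for all $v\in\ftl$. Non-degeneracy of the trace form (the dual-basis fact from the preliminaries) then forces $\beta(\bal_j+\bal_{j'})=0$, contradicting $\beta\neq0$ and $\bal_j\neq\bal_{j'}$.

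\textbf{Conclusion.} Thus the $n-1=2^\ell-1$ column-spaces are pairwise-distinct subspaces of dimension $\ell-1$; as $\ftl$ has exactly $(2^\ell-1)/(2-1)=2^\ell-1$ such subspaces, every one of them appears among the column-spaces exactly once, i.e.\ exactly $q-1$ times. By Lemma~\ref{lem:rotational_characterization} the scheme is rotational with column-dimension $\ell-1$, and Theorem~\ref{thm:fixed_basis} then gives its I/O cost as $\ell(2^\ell-2^{\ell-(\ell-1)})=\ell(2^\ell-2)=\ell(n-r)=k\ell$. The one genuinely load-bearing step I expect is the double-counting that upgrades ``all column-spaces have dimension $\ell-1$'' to ``$T$ is invertible'' — without invertibility the trace argument in the second paragraph stalls; everything after that is a short computation. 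It is also worth noting that the equivalence in Lemma~\ref{lem:rotational_characterization} between ``$n-1$ distinct hyperplanes'' and ``rotational'' is exactly what fails for $q>2$, which is why this converse is confined to the binary field.
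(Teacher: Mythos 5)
Your proof is correct, and it takes a genuinely different route from the paper's. The paper argues by contradiction: assuming two column-spaces coincide, it interpolates $g_i(x) = \bb_i + \ba_i\frac{x}{\bal_2+\bal_3}$, observes that $\{\ba_i\}$ then lies inside a hyperplane so cannot be a basis, and invokes a dedicated basis-shifting lemma (Lemma~\ref{lem:basis_transform}) to manufacture some $j$ with $\Sjo = \ftl$, contradicting Lemma~\ref{lem:column_space}. You instead package the scheme as a single endomorphism $T$ with $T(b_i)=a_i$, prove $T$ invertible directly by double-counting the incidences $\bal_j T(u)=u$ (which is exactly the statement that $\{a_i\}$ is a basis, the thing the paper proves indirectly), and then get distinctness of the column-spaces from non-degeneracy of the trace form together with surjectivity of $T$. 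This means you never need the interpolation step, never need Lemma~\ref{lem:basis_transform}, and the proof is constructive rather than by contradiction; the price is that the double-counting and the trace-pairing argument are two separate steps where the paper uses one. Both approaches isolate the same $q=2$ bottleneck — that there are exactly $n-1$ hyperplanes and Lemma~\ref{lem:rotational_characterization} then requires each to appear exactly once — and your explicit extraction of $T$ makes the structure of a bandwidth-optimal two-parity scheme ($b_i$ a basis, $a_i = T(b_i)$ with $T$ invertible) somewhat more visible than in the paper's contradiction argument.
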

\begin{proof}
By Lemma~\ref{lem:simplified}, it suffices to consider a bandwidth-optimal repair scheme for $\bco$. Supposed that this scheme is based on the dual codewords $\bgi = \big(g_i(\bal_1),\ldots,g_i(\bal_n)\big)$, $i\in [\ell]$, where $n\hspace{-0.05cm} =\hspace{-0.05cm} 2^\ell$, $\ftl\hspace{-0.05cm} =\hspace{-0.05cm} \{0=\bal_1,\bal_2,\ldots,\bal_n\}$, and $g_i(x) \in \ftl[x]$ are polynomials of degree at most one. Set $\bb_i = g_i(0)$, then $\{\bb_i\}_{i=1}^\ell$ is an $\ft$-basis of $\ftl$ since $\{\bgi\}_{i=1}^\ell$ forms a repair scheme for $\bco$. As the scheme is bandwidth-optimal, by Lemma~\ref{lem:column_space}, 
$\dim_{\ft}(\Sjo) = \ell-1$, for every $j \neq 1$.

To prove by contradiction, we assume that the scheme is \emph{not} rotational.
Due to Lemma~\ref{lem:rotational_characterization}, as $q = 2$, this means that there exist two identical column-spaces. Without loss of generality, we may assume that $\S_{23} \define \Sto \equiv \Stho$. 
Then $g_i(\bal_2) = \git \in \S_{23}$ and $g_i(\bal_3) = \gith \in \S_{23}$, for $i \in [\ell]$. Note that $\dim_{\ft}(\S_{23}) = \ell-1$. 
By interpolation,
\[
g_i(x) = \dfrac{\bal_3\git + \bal_2\gith}{\bal_2+\bal_3} 
+ \dfrac{\git + \gith}{\bal_2 + \bal_3}x = \bb_i + \ba_i \dfrac{x}{\bal_2+\bal_3},
\]
where $\ba_i \define \git + \gith \in \S_{23}$ and $\bb_i = g_i(0)$. 
Then $A\hspace{-0.05cm} =\hspace{-0.05cm} \{\ba_i\}_{i=1}^\ell\hspace{-0.05cm} \subseteq\hspace{-0.05cm} \S_{23}$ and $B\hspace{-0.05cm} =\hspace{-0.05cm} \{\bb_i\}_{i=1}^\ell$ satisfy the condition of Lemma~\ref{lem:basis_transform}. Hence, there exists $\bga\hspace{-0.05cm} \neq\hspace{-0.05cm} 0$ such that $A + \bga B$ is an $\ft$-basis of $\ftl$, which implies that $B + \bga^{-1}A$ is also a basis. Take $j \in [n]$ such that $\frac{\bal_j}{\bal_2+\bal_3} = \bga^{-1}$. Then $\bal_j \neq \bal_1 = 0$ and \vspace{-7pt}
\[
\Sjo\hspace{-0.1cm} =\hspace{-0.05cm} \mathsf{span}_{\ft}\hspace{-0.05cm}\big(\hspace{-0.05cm}\{g_i(\bal_j)\}_{i=1}^\ell\big)\hspace{-0.1cm}
=\hspace{-0.05cm} \mathsf{span}_{\ft}\hspace{-0.05cm}\big(\hspace{-0.05cm}\{\bb_i\hspace{-0.05cm} +\hspace{-0.05cm} \bga^{-1}\ba_i\}_{i=1}^\ell\big)\hspace{-0.06cm} =\hspace{-0.05cm} \ftl. \vspace{-3pt}
\]
This contradicts the earlier statement that $\dim_{\ft}(\Sjo) = \ell-1$ whenever $j \neq 1$. Thus, such a scheme must be rotational. The conclusion on the I/O cost follows from Theorem~\ref{thm:fixed_basis}.
\end{proof} 
\vspace{-3pt}

Finally, we remark that the conclusion of Theorem~\ref{thm:converse} does not extend to full-length MDS codes. Indeed, one can easily find a repair scheme  for a $[4,2]_4$ MDS code that is bandwidth optimal but not rotational. For instance, take $\bg^{(1)} = (1,0,1,1)$ and $\bg^{(2)} = (\bxi, 1, 0, 1)$, where $\ff_4 = \{0,1,\bxi,\bxi+1\}$. 
\vspace{-2pt}

\section*{Acknowledgment}
The authors thank Dung Duong for helpful discussions. 
This work is supported by the 210124 ARC DECRA grant (DE180100768), the Monash 250003 Faculty Initiative Fund, and the NSF grant CCF 1619189. 

\bibliographystyle{IEEEtran} \vspace{-3pt}
\bibliography{ReadCost}

\end{document}